\newtheorem{theorem}{Theorem}[section]
\newtheorem{lemma}[theorem]{Lemma}
\newtheorem{claim}[theorem]{Claim}
\newtheorem*{claim*}{Claim}
\newtheorem{proposition}[theorem]{Proposition}
\newtheorem{corollary}[theorem]{Corollary}
\newtheorem{remark}[theorem]{Remark}
\theoremstyle{definition}
\newtheorem{definition}[theorem]{Definition}
\newcommand{\RR}{{\mathbb R}}
\newcommand{\cR}{\mathcal{R}}
\newcommand{\cS}{\mathcal{S}}
\newcommand{\cT}{\mathcal{T}}
\newcommand{\cL}{\mathcal{L}}
\newcommand{\cBP}{\mathcal{K}}
\newcommand{\aA}{A}
\newcommand{\aE}{E}
\newcommand{\aAn}{\bar A}
\newcommand{\aG}{M}
\newcommand{\aJ}{J}
\newcommand{\ee}{\mathrm{e}}
\newcommand{\eps}{\varepsilon}
\newcommand{\nfrac}{\nicefrac}
\DeclareMathOperator*{\NSW}{NSW} 
\DeclareMathOperator{\OPT}{OPT} 
\DeclareMathOperator*{\argmax}{argmax}
\newcommand\blfootnote[1]{%
  \begingroup
  \renewcommand\thefootnote{}\footnote{#1}%
  \addtocounter{footnote}{-1}%
  \endgroup
}
\title{Approximating Nash Social Welfare\\ by Matching and Local Search\blfootnote{This article represents the unified 
journal version of three conference papers: the STOC 2021 paper \cite{GHV20} by JG, EH, and LAV; the FOCS 2022 paper \cite{LiV22} by WL and JV; and the STOC 2023 paper \cite{GHLVV23} by all five authors. 
JV and WL were supported by NSF Award 2127781. LAV received funding from the European Research Council (ERC) under the European Union's Horizon 2020 research and innovation programme (grant agreement no. ScaleOpt--757481). EH was supported by SNSF Grant 200021 200731/1. JG was supported by NSF Grants CCF-1942321 and CCF-2334461.}}
\date{{\tt jugal@illinois.edu, edinehusic@gmail.com, wzli@uni-bonn.de, lvegh@uni-bonn.de, jvondrak@stanford.edu}}
\author[1]{Jugal Garg}
\author[2]{Edin Husi\' c}
\author[3,5]{Wenzheng Li} 
\author[4,5]{L{\'{a}}szl{\'{o}} A. V{\'{e}}gh}
\author[3]{Jan Vondr\'{a}k}
\affil[1]{University of Illinois at Urbana-Champaign, IL}
\affil[2]{IDSIA, USI-SUPSI, Switzerland}
\affil[3]{Stanford University, CA}
\affil[4]{London School of Economics and Political Science, UK} 
\affil[5]{University of Bonn, Germany} 
\begin{document}
\maketitle 

\begin{abstract}
For any $\eps>0$, we give a simple, deterministic $(4+\eps)$-approximation algorithm for the Nash social welfare (NSW) problem under submodular valuations.  We also consider the asymmetric variant of the problem, where the objective is to maximize the weighted geometric mean of agents' valuations, and give an $\ee (\omega + 2 + \eps)$-approximation if the ratio between the largest weight and the average weight is at most $\omega$. 

We also show that the $\nfrac12$-EFX envy-freeness property can be attained simultaneously with a constant-factor approximation. More precisely, we can find an allocation in polynomial time that is both $\nfrac12$-EFX and an $(8+\eps)$-approximation to the symmetric NSW problem under submodular valuations. 
\end{abstract}

\section{Introduction}
We consider the problem of allocating a set $\aG$ of $m$ indivisible items among a set $\aA$ of $n$ agents, where each agent $i\in\aA$ has a valuation function $v_i: 2^{\aG}\to \RR_{\ge0}$ and a weight (entitlement) $w_i>0$ such that $\sum_{i\in\aA} w_i = 1$. The Nash social welfare (NSW) problem asks for an allocation ${\mathcal{S}}=(S_i)_{i\in \aA}$ that maximizes the weighted geometric mean of the agents' valuations,
\[ \NSW({\mathcal{S}}) = \prod_{i\in \aA} 
\left(v_i(S_i)\right)^{w_i}.\]
We refer to the special case when all agents have equal weight (i.e., $w_i=1/n$) as the \emph{symmetric} NSW problem, and call the general case the \emph{asymmetric} NSW problem.  Throughout, we let $w_{\max} \coloneqq \max_{i\in \aA} w_i$. For $\alpha>1$, an \emph{$\alpha$-approximate solution} to the NSW problem is an allocation $\cS$ with $\NSW(\cS)\ge \mathrm{OPT}/\alpha$, where $\mathrm{OPT}$ denotes the optimum value of the NSW-maximization problem.

Allocating resources among agents in a fair and efficient manner is a fundamental problem in computer science, economics, and social choice theory; we refer the reader to the monographs~\cite{Barbanel,Brams1996,BrandtCELP16,Moulin2004,Robertson1998,R16,Young1995} on the background. A common measure of efficiency is \emph{utilitarian social welfare}, i.e., the sum of the utilities  $\sum_{i\in \aA} v_i(S_i)$ for an allocation $(S_i)_{i\in \aA}$.
In contrast, fairness is often measured by \emph{max-min fairness}, i.e., $\min_{i\in\aA} v_i(S_i)$; maximizing this objective is also known as the \emph{Santa Claus problem} \cite{bansal2006santa}. 

Symmetric NSW provides a balanced tradeoff between the often conflicting requirements of fairness and efficiency.  It has been introduced independently in a variety of contexts. It is a discrete analogue of the Nash bargaining game~\cite{Kaneko1979,nash1950bargaining}; it corresponds to the notion of competitive equilibrium with equal incomes in economics~\cite{varian1973equity}; and arises as a proportional fairness notion in networking~\cite{kelly1997charging}. The more general asymmetric objective has also been well-studied since the seventies \cite{harsanyi1972generalized,kalai1977nonsymmetric}. It has found applications in different areas, such as bargaining theory~\cite{chae2010bargaining,Laruelle2007}, water resource allocation~\cite{degefu2016water,houba2013asymmetric}, and climate agreements~\cite{yu2017nash}.

A distinctive feature of the NSW problem is invariance under scaling of the valuation functions $v_i$ by independent factors $\lambda_i$, i.e., each agent can express their preference in a ``different currency'' without changing the optimization problem (see~\cite{Moulin2004} for additional characteristics).

\paragraph{Submodular and subadditive valuation functions}
A set function $v:\, 2^{\aG} \to \RR$  is \emph{monotone} if $v(S)\le v(T)$ whenever $S\subseteq T$. A monotone set function with $v(\emptyset) = 0$ is also called a \emph{valuation function} or simply {\em valuation}.
The function $v:\, 2^{\aG} \to \RR$ is \emph{submodular} if
\[ v(S) + v(T) \ge v(S\cap T) + v(S\cup T)\, \quad  \forall S, T \subseteq   \aG\, ,\]
and \emph{subadditive} if  
\[ v(S) + v(T) \ge v(S\cup T)\, \quad  \forall S, T \subseteq   \aG\, .\]
We assume the valuation functions are given by value oracles that return $v(S)$ for any $S\subseteq \aG$ in $O(1)$ time. %

\paragraph{$\nfrac12$-EFX allocations} %
Envy-freeness is one of the most desirable fair division concepts. However, for indivisible goods, exact envy-freeness is generally unattainable: if there is only one desirable item, all but one agent will necessarily be envious. This limitation has motivated a rich line of work introducing and analyzing various relaxations. Among these, envy-freeness
up to any item (EFX) is the strongest and widely regarded as the most compelling fairness notion in the discrete setting with equal entitlements~\cite{caragiannis2019unreasonable}.  An allocation $\cS=(S_i)_{i\in \aA}$ is said to be EFX if
\[v_i(S_i) \ge v_i(S_k \setminus \{j\}), \enspace\forall i, k \in \aA\,,  \forall j\in S_k\ .\]
That is, no agent envies another agent's bundle after the removal of any single item from that agent's bundle. It is not known whether EFX allocations always exist or not, even in the additive case with more than three agents, and it is regarded as ``fair division's biggest open question''~\cite{Procaccia20}. This motivated the study of its relaxation $\alpha$-EFX for an $\alpha \in (0, 1)$, where an allocation $\cS$ is said to be $\alpha$-EFX if 
\[v_i(S_i) \ge \alpha\cdot v_i(S_k \setminus \{j\}), \enspace\ \forall i, k \in \aA\, ,\forall j\in S_k\, .\]
The best-known $\alpha$, for which the existence is known, is $\nfrac12$ for submodular valuations, albeit with the efficiency guarantee of $O(n)$-approximation to the symmetric NSW problem~\cite{PlautR18, chaudhury2021fair}.

\paragraph{Our contributions}
Our main theorem on NSW is the following.
\begin{theorem}\label{thm:main}
For any $\eps>0$,
there is a deterministic polynomial-time 
$(nw_{\max}+2+\eps) \ee$-approxi\-mation algorithm for the asymmetric Nash social welfare problem with submodular valuations. 
For symmetric instances, the algorithm returns a $(4+\varepsilon)$-approximation. The number of arithmetic operations and value oracle calls is polynomial in $n$, $m$, and $1/\eps$.
\end{theorem}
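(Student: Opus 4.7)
The plan is to design a two-phase algorithm --- matching followed by local search --- and then to bound the NSW of the output allocation $\cS=(S_1,\ldots,S_n)$ against an optimal allocation $\OPT=(O_1^*,\ldots,O_n^*)$ using local-optimality inequalities together with submodularity. In the matching phase, we compute a maximum-weight bipartite matching $\mu\colon A\to G$ with edge weights $w_i\log v_i(\{g\})$, i.e., a single-item assignment maximizing $\prod_i v_i(\mu(i))^{w_i}$. The matched items play the role of ``anchors'': each agent is guaranteed one item of appreciable marginal value, and $\prod_i v_i(\mu(i))^{w_i}$ serves as a natural lower bound on $\NSW(\cS)$ whenever $\cS$ respects the matching.

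In the local search phase, we initialize with $S_i=\{\mu(i)\}$, greedily distribute the remaining items, and then repeatedly apply any single-item transfer, pairwise swap, or rematching move that multiplies $\NSW(\cS)$ by at least $1+\eps/\mathrm{poly}(n,m)$; envy-cycles are resolved whenever they arise, which weakly improves $\NSW$. Polynomial termination follows from a standard potential argument: $\log\NSW(\cS)$ increases by $\Omega(\eps/\mathrm{poly}(n,m))$ per move while its total range over the search is polynomially bounded in the input size.

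The core of the analysis is to translate local optimality into an approximation guarantee. For each $i\in A$ and each $g\in O_i^*$, let $k(g)$ denote the agent holding $g$ in $\cS$; the fact that transferring $g$ from $k(g)$ to $i$ is not an improving move yields a marginal inequality of the form
\[
w_i\,\frac{v_i(S_i+g)-v_i(S_i)}{v_i(S_i)}\;\le\;w_{k(g)}\,\frac{v_{k(g)}(S_{k(g)})-v_{k(g)}(S_{k(g)}-g)}{v_{k(g)}(S_{k(g)})}.
\]
Summing over $g\in O_i^*$, using submodularity of $v_i$ on the left-hand side to extract a term controlling $v_i(O_i^*)$, and then taking the weighted geometric mean over $i$, reduces the problem to a symmetric expression in the marginals of $\cS$; the matching optimality then controls the residual ``anchor'' part $\prod_i v_i(\mu(i))^{w_i}$.

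The main obstacle is balancing the cross-agent contributions under the weighted geometric mean. In the symmetric case ($w_i=1/n$), the target factor $4$ splits as $2\cdot 2$: a factor $2$ from the matching anchor bound and a factor $2$ from the local exchange inequality. In the asymmetric case, a single heavy agent's share of $\log\NSW$ scales with its weight, producing the $nw_{\max}$ term, while the factor $\ee$ arises from the inequality $(1+x/n)^n\le\ee^x$ used to linearize the weighted product when aggregating the local-optimality bounds. I expect the most delicate step to be designing the charging scheme between $\OPT$-items and $\cS$-bundles so that submodularity interacts cleanly with the weighted aggregation, yielding the uniform bound $(nw_{\max}+2+\eps)\ee$.
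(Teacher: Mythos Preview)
Your proposal has a genuine gap: the algorithm and analysis you sketch are precisely the ``natural'' approach that the paper explicitly rules out. In Section~\ref{sec:overview-GHVLiV} the authors note that local search applied directly to the NSW objective cannot yield a constant factor even with $k$-item swaps (already when $m=n$, every positive-NSW allocation is a matching and hence locally optimal, yet matchings can differ by an arbitrary factor), and that the anchor choice $\bar v_i(S)=v_i(S+\tau(i))$ --- essentially your ``keep $\mu(i)\in S_i$'' idea --- does not make their analysis go through. Your marginal inequality and the ``$4=2\cdot 2$'' split are not how the constants arise.

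The paper's algorithm has three phases, and the ideas you are missing are exactly what makes it work. First, after the initial matching $\tau$, the item set is split into $H=\tau(A)$ and $J=G\setminus H$, and local search is run \emph{only on $J$}. Second, the local search does not optimize $\NSW$ but the product $\prod_i \bar v_i(R_i)^{w_i}$ for the \emph{endowed} valuations $\bar v_i(S)=v_i(\ell(i))+v_i(S)$, where $\ell(i)\in\argmax_{j\in J}v_i(j)$ is $i$'s favourite item in $J$ (not the matched item). The point of $\ell(i)$ is that every marginal satisfies $\bar v_i(R)-\bar v_i(R-j)\le v_i(j)\le v_i(\ell(i))=\bar v_i(\emptyset)\le \bar v_i(R-j)$; this is what yields the bounded-spending property $p(R_i)\le w_i$ (Lemma~\ref{lem:spendingBound}), hence $p(J)\le 1$, and ultimately the factor $\ee$ via $\ee^{p(J)}$. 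With your anchor $\mu(i)$ there is no such bound, since $\mu(i)$ need not dominate the items in $J$ for agent $i$. Third, a \emph{rematching} phase reassigns the items in $H$ optimally given $(R_i)$; its analysis (Lemma~\ref{lem:rematching}) uses an alternating-path argument between $\tau$ and an optimal $H$-matching, relying on the key inequality $v_i(\tau(i))\ge v_i(\ell(i))$, which holds precisely because $\ell(i)\in J$ was available when $\tau$ was chosen. The constant $4$ in the symmetric case (resp.\ $(n w_{\max}+2)\ee$ in the asymmetric case) then falls out of a single AM--GM bound on $\prod_i\bigl(h_i+\tfrac{v_i(S_i)}{\max\{v_i(\ell(i)),v_i(R_i)\}}\bigr)^{w_i}$ with $\sum_i h_i\le n$ and the spending bound, not from a $2\cdot 2$ decomposition.
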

 Algorithm~\ref{alg:NSW-template} in Section~\ref{sec:nsw-alg} presents the algorithm asserted in the theorem.
Note that $n w_{\max}$ is the ratio between the maximum weight $w_{\max}$ and the average weight $1/n$. In the symmetric case, when all weights are $w_i = {1}/{n}$, this bound gives $(3+\eps)\ee < 8.2$. In this case, we can improve the analysis to obtain a $(4+\eps)$-approximation 
algorithm. Very recently, Bei, Feng, Hu, Li, and Zhang \cite{bei2025nashsocialwelfaresubmodular}  tightened the analysis of the algorithm in Theorem~\ref{thm:main}, improving the factor  to $3.56+\varepsilon$.

\medskip

As our second result, we show that a $\nfrac12$-EFX allocation with high NSW value exists and can also be efficiently found. We give a general reduction for subadditive valuations.
An allocation $\cS=(S_i)_{i\in \aA}$ is called \emph{complete} if every item is included, i.e., $\bigcup_{i\in\aA}S_i=\aG$; otherwise, it is called \emph{partial}. 
 In the context of  $\nfrac12$-EFX allocations, $\NSW(\cS)$ will always refer to the NSW value of allocation $\cS$ in the symmetric case ($w_i=1/n$ for all $i\in \aA$).

\begin{restatable}{theorem}{efxhalf}\label{thm:efx2}
There is a deterministic strongly polynomial-time algorithm that given a symmetric NSW instance with subadditive valuations and given a (complete or partial) allocation $\cS$ of the items, it  returns a complete allocation $\cT$ that is $\nfrac12$-EFX and $\NSW(\cT)\ge \NSW(\cS)/2$. 
\end{restatable}
The above algorithm is strongly polynomial in the value oracle model: the number of basic arithmetic operations and oracle calls is polynomially bounded in $n$ and $m$.  Together with Theorem~\ref{thm:main}, we obtain the following corollary.
\begin{corollary}
For any $\eps>0$,
there is a deterministic polynomial algorithm that returns a $\nfrac12$-EFX complete allocation that is {$(8+\varepsilon)$}-approximation to the symmetric NSW problem under submodular valuations. The number of arithmetic operations and value oracle calls is polynomial in $n$, $m$, and $1/\eps$.
\end{corollary}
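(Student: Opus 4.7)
The plan is to obtain the corollary by a straightforward two-stage composition of the two main theorems, with a small rescaling of $\eps$.

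First, I would invoke Theorem~\ref{thm:main} in the symmetric case on the given submodular instance, with tolerance parameter $\eps' \coloneqq \eps/2$ (or any $\eps'$ with $2(4+\eps')\le 8+\eps$). This yields, in deterministic polynomial time, a (complete) allocation $\cS$ with
\[
\NSW(\cS) \;\ge\; \frac{\OPT}{4+\eps'}\,.
\]
Because submodular valuations are in particular subadditive, I can then feed $\cS$ into the algorithm of Theorem~\ref{thm:efx2}, which runs in strongly polynomial time in the value oracle model. Its output is a complete allocation $\cT$ that is $\nfrac12$-EFX and satisfies $\NSW(\cT)\ge \NSW(\cS)/2$.

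Chaining the two bounds gives
\[
\NSW(\cT)\;\ge\;\frac{\NSW(\cS)}{2}\;\ge\;\frac{\OPT}{2(4+\eps')}\;\ge\;\frac{\OPT}{8+\eps}\,,
\]
so $\cT$ is simultaneously $\nfrac12$-EFX and an $(8+\eps)$-approximation. Since both stages are deterministic and each uses a number of arithmetic operations and value oracle calls polynomial in $n$, $m$, and $1/\eps'=2/\eps$, the overall algorithm has the claimed complexity.

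There is no genuine obstacle here once Theorems~\ref{thm:main} and~\ref{thm:efx2} are in hand; the only minor point to check is that Theorem~\ref{thm:efx2} is applicable to submodular valuations, which is immediate since every submodular valuation (being monotone with $v(\emptyset)=0$) is subadditive. The choice $\eps'=\eps/2$ is one convenient way to absorb the constant-factor loss in the EFX rounding into the target $(8+\eps)$ guarantee.
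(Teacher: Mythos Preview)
Your proposal is correct and matches the paper's approach exactly: the corollary is stated in the paper as an immediate consequence of combining Theorems~\ref{thm:main} and~\ref{thm:efx2}, with no separate proof given. Your observation that submodular valuations are subadditive (so Theorem~\ref{thm:efx2} applies) and your rescaling of $\eps$ fill in precisely the routine details the paper leaves implicit.
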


\subsection{Related work}
\paragraph{Prior work on approximating NSW} 
Let us first consider  \emph{additive valuations}, i.e., when $v_i(S) = \sum_{j \in S} v_{ij}$ for nonnegative values $v_{ij}$. 
Maximizing symmetric NSW is NP-hard already in the case of two agents with identical additive valuations, by a reduction from the Subset Sum problem. It is NP-hard to approximate within a factor better than $1.069$ for additive valuations~\cite{garg2017satiation}, and better than $1.5819$ for submodular valuations~\cite{GargKK20}. 

On the positive side, a number of remarkably different constant-factor approximations are known for additive valuations. The first such algorithm with the factor of $2\cdot e^{1/e} \approx 2.889$ was given by Cole and Gkatzelis~\cite{cole2015approximating} using a continuous relaxation based on a particular market equilibrium concept. Later,~\cite{cole2017convex} improved the analysis of this algorithm to achieve the factor of 2. Anari, Oveis Gharan, Saberi, and Singh~\cite{anari2017nash} used a convex relaxation that relies on properties of real stable polynomials. The current best factor is $\ee^{1/\ee}+\eps \simeq 1.45$ by Barman, Krishnamurthy, and Vaish~\cite{barman2018finding}; the algorithm uses a different market equilibrium based approach and also guarantees approximate EF1 fairness, a weaker version of the EFX property. 

For the general class of subadditive valuations~\cite{BBKS20,chaudhury2021fair}, $O(n)$-approximations are known. This is the best one can hope for in the value oracle model~\cite{BBKS20}, for the same reasons that this is impossible to improve for the utilitarian social welfare problem~\cite{DNS10}. Sublinear $O(n^{53/54})$ approximation was designed for XOS valuations if we are given access to both demand and XOS oracles~\cite{barman2021sublinear}. Recall that all submodular valuations are XOS, and all XOS valuations are subadditive. 

Constant-factor approximations were also obtained beyond additive valuation functions: capped-additive~\cite{garg2018approximating}, separable piecewise-linear concave (SPLC)~\cite{anari2018nash}, and their common generalization, capped-SPLC~\cite{ChaudhuryCGGHM22} valuations; the approximation factor for capped-SPLC valuations  matches the $\ee^{1/\ee}+\varepsilon$ factor for additive valuations. All these valuations are special classes of submodular. Subsequently, Li and Vondr\'ak~\cite{li2021estimating} designed an algorithm that estimates the optimal value within a factor of $\frac{\ee^3}{(\ee-1)^2} \simeq 6.8$ for a broad class of submodular valuations, such as coverage and summations of matroid rank functions, by extending the techniques of~\cite{anari2017nash} using real stable polynomials. However, this algorithm only estimates the optimum value but does not find a corresponding allocation in polynomial time.

In \cite{GHV20}, Garg, Husi\'c, and V\'egh developed a constant-factor approximation for a subclass of submodular valuations called \emph{Rado-valuations}. These include  weighted matroid rank functions and many others that can be obtained using operations such as induction by network and contractions. An important example outside this class is the coverage valuation. They attained an approximation ratio 772 for the symmetric case and {$772(w_{\max}/w_{\min})^3$} for the asymmetric case. 
Subsequently, Li and Vondr\'ak \cite{LiV22} obtained a randomized 
380-approximation for symmetric NSW under submodular valuations by extending the 
the approach of \cite{GHV20}.

We significantly improve and simplify the approach used in \cite{GHV20} and \cite{LiV22}; we give a comparison to these works in Section~\ref{sec:overview-GHVLiV}.

\paragraph{Subsequent work on approximating NSW}
Subsequent to the conference version of this work \cite{GHLVV23}, Dobzinski, Li, Rubinstein, and Vondr\'ak \cite{dobzinski2023constant} obtained a constant factor approximation for symmetric NSW for subadditive valuations, assuming a demand oracle. They extend the approach of \cite{GHV20} and \cite{LiV22}, using also a configuration LP rounding. 

Further, Brown, Laddha, Pittu, and Singh \cite{brown2024approximation} obtained an approximation factor of $5\exp(2\log n + 2\sum_{i\in\aA} w_i\log w_i)$ for additive valuations. The term in $\exp(.)$ is twice the Bregman divergence between the distribution $(w_i)_{i\in \aA}$ and the uniform distribution.
As pointed out by Bento Natura~\cite{bento}, a simple change in our analysis leads to the improved bound $\exp(\log(3n)+\sum_{i\in\aA} w_i\log w_i)\cdot (\ee+\eps)$ in Theorem~\ref{thm:main} for general submodular valuations; this is explained in Remark~\ref{rem:bento}.
Feng and Li~\cite{fengli} obtained an $(\ee^{1/\ee}+\varepsilon)$-approximation for general asymmetric NSW with additive valuations using a novel rounding of the configuration LP. Soon after, their approach was generalized by Feng, Hu, Li, and Zhang \cite{feng2024constant} to get a $(233+\varepsilon)$-approximation algorithm for the asymmetric NSW with general submodular valuations. Most recently,  Bei, Feng, Hu, Li, and Zhang \cite{bei2025nashsocialwelfaresubmodular} further improved this ratio to $5.18$, as well as tightened the analysis Theorem~\ref{thm:main} as mentioned above.

\paragraph{Prior work on EFX and related notions} The existence of EFX allocations has not been settled despite significant efforts~\cite{caragiannis2019unreasonable, PlautR18, ChaudhuryGM20, Procaccia20}. This problem is open for more than two agents with general monotone valuations (including submodular), and for more than three agents with additive valuations. This necessitated the study of its relaxations, $\alpha$-EFX, for $\alpha\in(0,1)$, and of partial EFX allocations. For the notion of $\alpha$-EFX, the best-known $\alpha$ is $0.618$ for additive~\cite{amanatidis2020mnwefx} and $0.5$ for subadditive valuations (including submodular)~\cite{PlautR18}.

For the notion of partial EFX allocations, the existence is known for general monotone valuations if we do not allocate at most $n-2$ items~\cite{CKMS21,Mahara21,BergerCFF}, albeit without any efficiency guarantees. For additive valuations, although $n-2$ is still the best bound known, there exist partial EFX allocations with a 2-approximation to the NSW problem~\cite{CaragiannisGH19}.

A well-studied weaker notion is envy-freeness up to one item (EF1), where no agent envies another agent after the removal of \emph{some} item from the envied agent's bundle. EF1 allocations are known to exist for general monotone valuations and can also be computed in polynomial time~\cite {LiptonMMS04}. However, an EF1 allocation alone is not desirable because it might be highly inefficient in terms of any welfare objective.  For additive valuations, the allocations maximizing NSW are EF1~\cite{caragiannis2019unreasonable}. Although the NSW problem is APX-hard~\cite{lee2017apx}, there exists a pseduopolynomial time algorithm to find an allocation that is {approximately} EF1 and 1.45-approximation to the NSW problem under additive valuations~\cite{barman2018finding}. For capped-SPLC valuations,~\cite{ChaudhuryCGGHM22} shows the existence of an allocation that is $\nfrac12$-EF1 and $1.45$-approximation to the NSW problem. 

Subsequent to the conference version of this work, Feldman, Mauras, and Ponitka~\cite{FeldmanMP23} improved Theorem~\ref{thm:efx2} to show the existence of an allocation $\cT$ that is $\nfrac12$-EFX and $\NSW(\cT) \ge \nfrac23 \NSW(\cS)$ for a given allocation $\cS$. For subadditive valuations,~Barman and Suzuki~\cite{BarmanS24} recently showed the existence of a complete allocation that is EF1 and provides a 2-approximation to the NSW problem, as well as of a partial allocation that is  EFX and also provides a 2-approximation to the NSW problem.

\paragraph{Notation}
We use $\log(x)$ for the natural logarithm throughout. For set $S\subseteq \aG$ and $j\in \aG$, we use $S+j$ to denote $S\cup\{j\}$ and $S-j$ for $S\setminus \{j\}$ and we write $v(j)$ for $v(\{j\})$. For a vector $p\in \RR^\aG$ and $S\subseteq \aG$, we denote $p(S)=\sum_{i\in S}p_i$.

By a \emph{matching} from $\aA$ to $\aG$ we mean a mapping $\tau:\, \aA\to \aG\cup\{\bot\}$ where $\tau(i)\neq \tau(j)$ if $\tau(i)\neq \bot$; $\bot$ is a special symbol representing unmatched agents. For $X\subseteq A$, a  matching is \emph{$X$-perfect}, if $\tau(i)\neq\bot$ for every $i\in X$.
An $A$-perfect matching is simply a perfect matching when every node on both sides is matched, which is possible when $|A| = |\aG|$.

Recall that the definition of valuation functions requires $v(\emptyset)=0$. In our algorithm, we will use the concept of endowed valuation functions. The set function $v\, :\, 2^\aG\to \mathbb{R}$ is an \emph{endowed valuation function}, if it is monotone, and  $2v(\emptyset)\ge\max_{g\in \aG} v(g)$, i.e.,  the initial value is already at least half the value of any singleton.

\paragraph{Overview of the paper}
Section~\ref{sec:nsw-alg} describes the NSW approximation algorithm. A comparison of our technique with previous approaches is given in Section~\ref{sec:overview-GHVLiV}.  
The exposition of the fairness guarantee is given in Section~\ref{sec:fairness}.
We conclude the paper with some remarks in Section~\ref{sec:conclusion}.

\section{Approximation algorithm for Nash social welfare}\label{sec:nsw-alg}
The NSW algorithm asserted in Theorem~\ref{thm:main} is shown in
Algorithm~\ref{alg:NSW-template}.
The subroutine \pLocalSearch will be described in Algorithm~\ref{alg:localSearch}.  Throughout, we assume $0<\varepsilon\le 1$; we can always replace $\varepsilon$ by $\min\{\varepsilon,1\}$.
The algorithm proceeds in the following three phases.

\begin{algorithm}[htbp]
\DontPrintSemicolon
\KwIn{Valuations $(v_i)_{i\in \aA}$ over $\aG$, weights $w\in \mathbb{R}_{>0}^\aA$ such that $\sum_{i\in\aA} w_i = 1$, and $\varepsilon>0$.}
\KwOut{Allocation $\cS=(S_i)_{i\in \aA}$.}
Find an $A$-perfect matching $\tau:\aA \to \aG$ maximizing $\prod_{i\in \aA} (v_i(\tau(i)))^{w_i}$ and set $H \coloneqq \tau(A), J \coloneqq \aG \setminus H, \aAn:=\{i\in\aA:\, v_i(\aJ)>0\}$ \label{algo:phase1} \; 
\For{$i\in \aAn$}{$\ell(i) \gets \argmax \{ v_i(\ell): \ell \in J \}$ \; 
Define  $\bar{v}_i(S) \coloneqq v_i(\ell(i)) + v_i(S)$ \;}
$\mathcal{R}=(R_i)_{i\in \aA} \coloneqq  $\pLocalSearch{$J,(\bar v_i)_{i\in \bar{\aA}}$} \label{algo:phase2} \;
Find a matching $\sigma: \aA \to  H$ maximizing $\prod_{i \in \aA} (v_i(R_i + \sigma(i)))^{w_i}$ \label{algo:phase3}\;
\Return{$\cS=(R_i + \sigma(i))_{i\in \aA}$}\;
\caption{Approximating the submodular NSW problem}\label{alg:NSW-template}
\end{algorithm}

\paragraph{Phase 1: Initial matching}
We find an optimal assignment of one item to each agent, i.e.,~a matching 
$\tau:\aA \to \aG$ maximizing $\prod_{i\in \aA} (v_i(\tau(i)))^{w_i}$. 
This can be done using a maximum-weight matching algorithm with weights
$w_i \log v_i(j)$ in the bipartite graph between $\aA$ and $\aG$ with edge set $\{(i,j):\, v_i(j)>0\}$.
If no matching of cardinality $n=|\aA|$ exists, then we can conclude that there is no allocation with positive NSW value, and  return an arbitrary allocation. For the rest of the paper, we assume there is a matching covering $\aA$, and let $ H \coloneqq \tau(A)$ be the set of matched items. 

\paragraph{Phase 2: Local search} In the second phase, we let $\aJ \coloneqq \aG\setminus H$ denote the set of items not assigned in the first phase.
Recall that we let $\aAn:=\{i\in\aA:\, v_i(\aJ)>0\}$ denote the set of 
agents that have a positive value for some item in $\aJ$.
For every $i\in\aA$,  we select
\[\ell(i) \in \argmax_{j\in \aJ} v_{i}(j) \]
 as a \emph{favorite} item of agent $i$ in $\aJ$ (for agents in $\aA\setminus \aAn$, this will be an arbitrary item with $v_i(\ell(i))=0$.) By submodularity, $v_{i}(\ell(i))>0$.
For each $i\in\aAn$, we define the endowed valuation function
 $\bar v_i : 2^\aJ \to \RR_{> 0}$ as 
 \[
 \bar v_i(S) \coloneqq v_i(\ell(i)) + v_i(S) \quad \forall S\subseteq \aJ\, .
 \]
 Thus, $\bar v_i(\emptyset)=v_{i}(\ell(i))$, and $\bar v_i(j) \le 2 \bar  v_i(\emptyset)$ for any $j\in\aJ$. Further, we set the accuracy parameter
\begin{equation}\label{eq:bar-eps}
\bar\eps\coloneqq \frac{\eps}{2m}\, .%
\end{equation}
Note that by the assumption $\eps\le 1$, the following bound holds:
\begin{equation}\label{eq:bar-eps-bound}
(1+\bar\eps)^m< 1+\eps\, .
\end{equation}
Our local search starts with allocating all items to a single agent  in $\aAn$.
As long as moving one item to a different agent increases the potential function
\[
\prod_{i\in \aAn} 
\left(\bar v_i(R_i)
\right)^{w_i}
\]
by at least a factor $(1+\bar\eps)$, we perform such an exchange. Phase 2 terminates when no more such exchanges are possible, and returns the current allocation. For all agents $i\in\aA\setminus \aAn$, we let $R_i=\emptyset$.

\begin{algorithm}[htbp]
\DontPrintSemicolon
$R_{k} \gets \aJ$ for some $k\in \aAn$ and $R_i\gets \emptyset$ for $i\in \aA \setminus \{k$\}\;
\While{$\exists i,k\in\aAn$ and $j\in R_i$ such that $\left( \frac{\bar v_i(R_i - j)}{\bar v_i(R_i)} \right)^{w_i}
\cdot \left( \frac{\bar v_k(R_k + j)}{\bar v_k(R_k)} \right)^{w_k} > 1+\bar\eps$}{
$R_i \gets R_i - j$ and $R_k \gets R_k + j$\;}
\Return{$\cR:=(R_i)_{i\in \aA}$}
\caption{{\texttt{LocalSearch($J,(\bar v_i)_{i\in \aAn}$)}}}\label{alg:localSearch}
\end{algorithm}

\paragraph{Phase 3: Rematching}
In the final phase, we match the items in $H$ to the agents optimally, considering allocation $\cR = (R_i)_{i\in\aA}$ of $J$. This can be done by again solving a maximum-weight matching problem, now with weights $w_{ij} = w_i \log v_i(R_i+j)$. 
\medskip

In the remainder of this section, we prove Theorem~\ref{thm:main}.
In Section~\ref{sec:rematching}, presents the main `Rematching Lemma' (Lemma~\ref{lem:rematching-estimation}) that asserts that Algorithm~\ref{alg:NSW-template} has value at least corresponding to the following solution: for every agent, take the best of $R_i$, $\ell(i)$, and $\pi(i)$, where $\pi(i)$ is a matching assigning to every agent their most valuable item from $H$ in an optimal solution. Note that $\max\{v_i(R_i),v_i(\ell(i))\}\ge \frac{1}{2} \bar v_i(R_i)$, hence, this enables us to relate the value of the solution to the local search with respect to the endowed valuations.

 In Section~\ref{sec:local}, we formulate simple properties of approximate local optimal solution found in Phase 2. 
 We then compare the allocation of items in $J$ in an optimal solution to $(R_i)_{i\in \aA}$  via a marginal pricing argument. This is done separately for the symmetric case (Section~\ref{sec:symmetricLS}) and the asymmetric case (Section~\ref{sec:AsymmetricAnalysisSimple}). The arguments follow the same lines, but the marginal prices are defined differently in the two settings. These together complete
the proof of Theorem~\ref{thm:main}. 

\subsection{The rematching lemma}
\label{sec:rematching}
A crucial step in our algorithm relates the value of the assignment output by Algorithm~\ref{alg:NSW-template} to the contribution of $H$ to an optimal allocation. Suppose that the optimal solution is
$(H_i\cup S_i)_{i\in \aA}$, where $H_i\subseteq H$ and $S_i\subseteq J$. This optimal solution does not necessarily  assign $H$ via a matching---i.e., the cardinalities of
$H_i$ could be arbitrary. Given $(H_i)_{i\in \aA}$, we define a matching $\pi: \aA \to H$
as follows: If $ H_i \neq \emptyset$, let $\pi(i) \in \arg\max_{j\in H_i} \bar v_i(j)$. If $H_i = \emptyset$, then define $\pi(i)$
arbitrarily so that $\pi$ is a perfect matching between $\aA$ and $H$.

A key lemma is the following. 
We note that the argument corresponds to a more general statement about bipartite matchings which might be of independent interest. We present and discuss this statement in Appendix~\ref{sec:laci-matching}.

\begin{lemma}[Rematching Lemma]\label{lem:rematching-estimation}
The value obtained by Algorithm~\ref{alg:NSW-template} is at least 
\[
\prod_{i\in \aA} \left(\max\{v_i(\pi(i)), v_i(\ell(i)), v_i(R_i)\}\right)^{w_i}.
\]
\end{lemma}

\begin{proof}
The allocation returned by our algorithm is $(R_1 + \sigma(1), R_2 + \sigma(2), \ldots, R_n + \sigma(n))$ where $(R_1,\ldots,R_n)$ is the partition of $X \setminus H$ found by the local search, and $\sigma:A \to H$ is the matching found in the last stage of our algorithm. Instead, let us consider the following allocation, which is more restricted: Given $H$ and $(R_1,\ldots,R_n)$, each agent can choose either their set $R_i$, or one of the items in $H$, but not both. We prove that the best such allocation already satisfies the guarantee of the lemma.

Without loss of generality, let us denote the set of agents as $\aA = \{1,2,\ldots,n\}$, with $\aAn=\{1,2,\ldots,\bar n\}$ for $\bar n\le n$. 
Let us construct the following bipartite graph:  one side includes $A$, and the other side has $n+2\bar n$ nodes, labeled $\{h_1,\ldots,h_n, \ell_1,\ldots, \ell_{\bar n}, r_1, \ldots, r_{\bar n} \}$. The nodes $h_1,\ldots, h_n$ represent the elements of $H$. The nodes $\ell_1,\ldots,\ell_{\bar n}$ represent the items $\ell(1), \ldots, \ell(\bar n)$ --- however, $\ell_1, \ldots, \ell_{\bar n}$ are distinct nodes even if some items $\ell(i)$ coincide. Finally, the nodes $r_1,\ldots, r_{\bar n}$ represent the sets $R_1, \ldots, R_{\bar n}$. We define an edge $(i, h_j)$ for each $i \in \aA$ and $h_j \in H$ with $v_i(h_j)>0$, of weight $w_i\log v_i(h_j)$. We also define an edge $(i, \ell_i)$ of weight $w_i \log v_i(\ell(i))$, and an edge $(i, r_i)$ of weight $w_i \log v_i(R_i)$ for every $i\in \aAn$.

\begin{figure}[htb!]
\begin{flushleft}
\caption{Proof of Lemma~\ref{lem:rematching-estimation}, illustration: the alternating path $h_3 - 3 - h_4 - 4 - \ell_4$ can be used to improve either $\rho$ or $\tau$.}
\begin{tikzpicture}[scale=1.0]
	\node at (-7,6) {};
    \foreach \i in {1, 2, 3, 4} {
        \node[rectangle, draw, fill=blue!40] (A\i) at (0,5-\i) {\large $\i$};
        \node[circle, draw, fill=red!40] (H\i) at (2.5, 6-0.7*\i) {\small $h_{\i}$};
        \node[circle, draw, fill=green!50] (L\i) at (2.5, 3-0.7*\i) {\small $\ell_{\i}$};
	 }

    \foreach \i in {1, 2, 3, 4} {
	\node [circle, draw, fill=red!20]  (S\i) at (-2,5-\i)  {$R_{\i}$};
	 }
    
    \node at (0, 4.8) {$A$};
    \node at (3.2, 4.2) {$H$};

    \foreach \i in {1, 2, 3, 4} {
	    \draw [gray] (A\i) -- (S\i);
	    \draw [gray] (A\i) -- (L\i);
	    
	    \foreach \j in { 1, 2, 3, 4} {
	    	\draw [gray] (A\i) -- (H\j);
		}
	}
    	
	\draw [thick, red] (A1) -- (H2);
	\draw [thick, red] (A2) -- (S2);
	\draw [thick, red] (A3) -- (H4);
	\draw [thick, red] (A4) -- (L4);	
 	\node [red] at (1.25, 0.2) {$\rho$};
	
	\draw [thick, blue] (A1) -- (H1);
	\draw [thick, blue] (A2) -- (H2);
	\draw [thick, blue] (A3) -- (H3);	
	\draw [thick, blue] (A4) -- (H4);
 	\node [blue] at (1.25, 5.0) {$\tau$};	
\end{tikzpicture}
\end{flushleft}
\end{figure}

Let $\rho$ be a maximum-weight matching of cardinality $|A|$ in this bipartite graph. Clearly, its weight is at least $\sum_{i=1}^{n} w_i\log \max \{ v_i(\pi(i)), v_i(\ell(i)), v_i(R_i) \}$, because one possible matching is to choose for each $i \in A$ the edge corresponding to the maximum of the 3 quantities. An issue with this matching is that it does not correspond to a feasible allocation, due to possible overlaps between the items $\ell(i)$ and the sets $R_i$. 

The key claim is that there is a maximum-weight matching in this graph which {\em does not use any of the edges} $(i,\ell_i)$. We prove this as follows: Suppose that $\rho$ is a maximum-weight matching that uses as few of the $(i,\ell_i)$ edges as possible. Recall also the matching $\tau$ constructed in the initial stage, which is the maximum-weight matching between $A$ and the full item set $\aG$. In our graph, $\tau$ matches $A$ to $H = \{ h_1, \ldots, h_n\}$. Suppose that $\rho$ uses an edge $(i^*, \ell_{i^*})$ and consider the alternating path $P$ in $\rho \Delta \tau$, starting from this edge ($\tau$ does not match the node $\ell_{i^*}$, so this is indeed the start of an alternating path). Except for the node $\ell_{i^*}$, the alternating path hops between $A$ and $H$, because starting from $\ell^*$, an edge of $\rho$ takes us to $A$, then an edge of $\tau$ takes us to $H$, etc. Eventually, the alternating path terminates at a node of $H$ which is not matched by $\rho$ (whereas each node of $A$ is matched by both $\tau$ and $\rho$). Now, consider two cases. \\
(1) The weight of the $\tau$-edges on this alternating path $P$ is at least as large as the weight of the $\rho$-edges on $P$. Then we can perform a swap, $\rho' = \rho \Delta P$. This defines a matching $\rho'$  in our graph whose weight is at least as large as that of $\rho$, and the number of $(i, \ell_i)$ edges has decreased---a contradiction. \\
(2) The weight of the $\tau$-edges on $P$ is strictly less than the weight of the $\rho$-edges on $P$. Then we can perform the swap $\tau' = \tau \Delta P$. Note that $\tau'$ is a valid matching of items to agents, because it uses only $H$ and one item $\ell(i^*)$, which is outside of $H$, so no conflict can arise here. Thus we obtain a matching $\tau':A \to X$  of weight strictly larger than $\tau$, which means $\tau$ was not a maximum-weight matching in the initial stage---a contradiction.

We conclude that there is a maximum-weight matching in our graph which does not use any $(i, \ell_i)$ edge, and hence it corresponds to a feasible allocation of either $R_i$ or an element of $H$ to each agent.
\end{proof}

\subsection{Characterizing local optima}\label{sec:local}
\label{sec:localSearch}
In order to upper bound the optimal NSW value to the value obtained in Algorithm~\ref{alg:NSW-template} via Lemma~\ref{lem:rematching-estimation}, we also need to relate $S_i$, the part of the optimal solution in $\aJ$, to $R_i$ and $\ell(i)$. 
Note that Lemma~\ref{lem:rematching-estimation} does not use the local optimality of $R_i$ but works for an arbitrary partition of $J$. In this subsection, we focus on properties of locally optimal solutions.

Recall that we work with the item set
$\aJ$, set of agents $\aAn$, favourite items $\ell(i)$, and endowed valuations  $\bar{v}_i(S) = v_i(\ell(i)) + v_i(S)$.

\begin{definition}[$\bar\eps$-local optimum]
A complete allocation $\cR = (R_i)_{i\in \aA}$ is an \emph{$\bar\eps$-local optimum with respect to valuations $\bar v_i$}, if for all pairs of different agents $i, k \in \aAn$ and all $j\in R_i$ it holds that
\[
\left( \frac{\bar v_i(R_i - j)}{\bar v_i(R_i)} \right)^{w_i}
\cdot \left( \frac{\bar v_k(R_k + j)}{\bar v_k(R_k)} \right)^{w_k} \le  1+\bar\eps\,.
\]
\end{definition}
A 0-local optimum will be simply called a \emph{local optimum}. 

\begin{lemma}
Consider an NSW instance with submodular valuations, and let $\eps>0$.  The subroutine \pLocalSearch{$J,(\bar v_i)_{i\in \bar{\aA}}$} returns an $\bar{\eps}$-local maximum with respect to the endowed valuations $\bar v_i$ in $O\left(\frac{m}{\eps} \log m\right)$ exchange steps.
\end{lemma} 
\begin{proof}
It is immediate that if the algorithm terminates, it returns  
an $\bar{\eps}$-local maximum.
Recalling that $\bar v_i(j)\le 2\bar v_i(\emptyset)$ for any $j\in\aJ$, submodularity implies 
\[
\bar v_i(\aJ) < \sum_{i\in \aJ} v_i(j)+v_i(\ell(i))\le (|\aJ|+1) = v_i(\ell(i))= (|\aJ|+1) \bar v_i(\emptyset)\le m \bar v_i(\emptyset)
\]
for every $i\in\aAn$, using also that $|J|\le m-n\le m-1$.
Hence, 
\[\prod_{i\in \aAn}
\left({\bar v_i(\aJ})\right)^{w_i} \le m\prod_{i\in \aAn}\left({\bar v_i(\emptyset)}\right)^{w_i}\, ,\]
and therefore the product $\prod_{i\in \aAn}{\bar v_i(R_i)^{w_i}}$ may grow by at most a factor of $m$ throughout all exchange steps. Every swap increases this product by at least a factor  $(1+\bar\eps)$. 
Thus, the total number of swaps is bounded by $\log_{1+\bar\eps} m = m \log_{1+\eps} m = O\left(\frac{m}{\eps} \log m\right)$.
\end{proof}

We need the following two properties of submodular functions.
\begin{proposition}
\label{prop:submodularFraction}
Let $\bar{v}: 2^\aJ\to \RR_{> 0}$ be a monotone submodular function. 
Let $S\subseteq T\subseteq \aJ$ and $j \in \aJ$.
Then,
\[
\frac{\bar v(T+j)}{\bar v(T)} \le \frac{\bar v(S+j)}{\bar v(S)}\,.
\]
\end{proposition}

\begin{proof}
By the monotonicity and submodularity of $\bar v$, we have
\[
\begin{aligned}
\frac{\bar v(T+j)}{\bar v(T)} 
= \frac{\bar v(T) + \bar v(T+j) - \bar v(T)}{\bar v(T)}
& \le \frac{\bar v(S) + \bar v(T+j) - \bar v(T)}{\bar v(S)}  \\
&\le \frac{\bar v(S) + \bar v(S+j)- \bar v(S)}{\bar v(S)} 
= \frac{\bar v(S+j)}{\bar v(S)}.
\end{aligned}
\qedhere
\]
\end{proof}
The next bound will be used in the marginal pricing arguments.
\begin{proposition}
\label{prop:marginals}
Let $\bar{v}_i: 2^\aJ\to \RR_{> 0}$ be the submodular endowed valuation for an agent $i\in \bar{\aA}$.
For any $j \in R$, 
$$ \bar{v}_i(R-j) \ge  \sum_{k \in R} (\bar{v}_i(R) - \bar{v}_i(R-k)).$$
\end{proposition}

\begin{proof}
Let us denote $R-j \coloneqq \{r_1, \dots, r_s\}$. By submodularity, we have
\[
\begin{aligned}
\bar v_i(R-j) &= \bar v_i(\emptyset) + \sum_{k=1}^s (\bar v_i(\{r_1,\dots, r_k\})-\bar v_i(\{r_1,\dots, r_{k-1}\}))\\
&\ge \bar v_i(\emptyset) + \sum_{k=1}^s (\bar v_i(R) - \bar v_i(R-r_{k})) \geq  \sum_{k \in R} (\bar v_i(R) - \bar v_i(R-r_{k}))
\end{aligned}
\]
where in the last step, we used the fact that  $\bar v_i(\emptyset) = v_i(\ell(i))\ge v_i(j) \ge \bar v_i(R)-\bar v_i(R-j)$\,. %
\end{proof}

We analyze our local search in slightly different ways in the symmetric case and the general asymmetric case. We consider the symmetric case first.

\subsection{Local optimality analysis for symmetric NSW}
\label{sec:symmetricLS}
Recall that $\bar\eps=\eps/(2m)$. We define another accuracy parameter for the analysis as 
\begin{equation}\label{eq:hat-eps}
\hat \eps \coloneqq (1+\bar \eps)^n - 1\, .
\end{equation}
Thus, $1+ \hat \eps= (1+\bar \eps)^{n} \le (1+\bar \eps)^m = 1+\eps$ from \eqref{eq:bar-eps-bound} and  $n \le m$. 
In what follows, we complete the proof of Theorem~\ref{thm:main} for the symmetric case, i.e., $w_1 = \ldots = w_n = 1/n$, showing the approximation factor $4+\varepsilon$. We use a marginal pricing argument, associating a price $p_j$ with each item that reflects the marginal utility of the agent who owns it. The two main arguments show that {\em (a)} the total price of the items is at most $n$; and {\em (b)} the relative utility increase if an agent receives a bundle $T$ on top of their current bundle can be bounded by the price $p(T)$ (and a small error term).

Let $j\in \aJ$ and let $i\in \aAn$ be the agent such that $j\in R_i$. 
We define the \emph{price} of $j$ as
\[
p_j \coloneqq \frac{\bar v_{i}(R_{i})}{\bar v_{i}(R_{i} - j)} - 1 = \frac{\bar v_{i}(R_{i})- \bar v_{i}(R_{i} - j)}{\bar v_{i}(R_{i} - j)}  \,.
\]
The first lemma shows that the \emph{spending} of each agent $i$, $p(R_i) = \sum_{j \in R_i} p_j$, is at most $1$, and thus the total value of the items is at most $n$.

\begin{lemma}[Bounded spending]
\label{lem:sym-spending-bound}
Let $\cR=(R_i)_{i\in \aA}$ be an $\bar\eps$-local optimum with respect to the endowed valuations $\bar v_i$. Then, $p(R_i) \le 1$ for every agent $i\in \aAn$, and consequently, $p(J) \leq |\aAn|\le n$.
\end{lemma}

\begin{proof}
From the definition of the prices $p_j$, and by Proposition~\ref{prop:marginals}, we have
\[
p(R_i) = \sum_{j\in R_i} \frac{\bar v_i(R_i) - \bar v_i(R_i-j)}{\bar v_i(R_i-j)} \le \frac{\sum_{j\in R_i} (\bar v_i(R_i) - \bar v_i(R_i-j))}{\min_{k\in R_i} \bar v_i(R_i-k)} \leq 1\,.
\]
Since $(R_1,\ldots,R_n)$ is a partition of $J$ (every item is allocated throughout the local search), we have
\[ p(J) = \sum_{j \in J} p_j = \sum_{i \in \aAn} \sum_{j \in R_i} p_j \leq |\aAn|\,.\qedhere \]
\end{proof}

\begin{lemma}
\label{lem:sym-local-opt}
Given an {$\bar\eps$-local} optimum $\cR=(R_i)_{i\in \aAn}$, and the prices $p_j$ defined as above, for every $k\in\aAn$ and  $T\subseteq \aJ$, it holds that
\[
\frac{\bar v_k(R_k \cup T)}{\bar v_k(R_k)} \le 1 + p(T) + 2 {\hat\eps}|T| \,.\]
\end{lemma}
\begin{proof}
First, note that
\begin{equation}\label{eq:sym-price-bound}
p_j\le 1\quad \forall j\in \aJ\, .
\end{equation}
This follows already by Lemma~\ref{lem:sym-spending-bound}; one can also see it directly since by construction of $\bar v_i$ and submodularity, $\bar v_i(R_i) - \bar v_i(R_i-j) \leq v_i(j) \leq \bar v_i(\emptyset) \leq \bar v_i(R_i-j)$. Hence, $p_j = \frac{\bar v_{i}(R_{i})- \bar v_{i}(R_{i} - j)}{\bar v_{i}(R_{i} - j)} \leq 1$.

Let $i\in \aAn$ such that $j\in R_i$.
From the $\bar\eps$-optimality of $\cR$, using also the definition of $\hat\eps$  and \eqref{eq:sym-price-bound}, we get 
\begin{equation}\label{eq:sym-marginal}\frac{\bar v_k(R_k+j)}{\bar v_k(R_k)} \leq (1+\bar{\eps})^n \frac{\bar v_{i}(R_{i})}{\bar v_{i}(R_{i} - j)} = (1+\hat\eps) (1+p_j)\le 1+p_j+2\hat\eps\, ,
\end{equation} because otherwise we could swap item $j$ to agent $k$; recall that $w_k=w_i=1/n$. Using  submodularity together with this bound, we have
\[
\begin{aligned}
\frac{\bar v_k(R_k \cup T)}{\bar v_k(R_k)} & \le
 \frac{\bar v_k(R_k) + \sum_{j \in T} (\bar v_k(R_k + j) - \bar v_k(R_k))}{\bar v_k(R_k)}\\
&\le 1 + \sum_{j \in T} (p_j+2\hat\eps) \leq p(T)+1 + 2\hat\eps|T|\, .
\end{aligned}
\]
\end{proof}
As a small detour to illustrate the usefulness of the above bound, we show that local search  gives a $2$-approximation of the optimal Nash social welfare with respect to the {\em endowed valuations} for the agents in $\aAn$ that have positive valuation for $J$. 

\begin{proposition}\label{prop:firstWelfare}
Let $\cR= (R_i)_{i\in \aA}$ be a local optimum returned by Algorithm~\ref{alg:localSearch} and $\cS=(S_i)_{i\in \aA}$ be an optimal NSW allocation  of $\aJ$ with respect to the endowed submodular valuations $\bar v_i$.
Then 
\[
\left(\prod_{i\in \aAn} \bar v_i(R_i)\right)^{1/n}  \ge \frac{1}{2} \cdot \left(\prod_{i\in \aAn} \bar v_i(S_i)\right)^{1/n} \,. 
\]
\end{proposition}
\begin{proof}
By Lemma~\ref{lem:sym-spending-bound}, $\sum_{i\in\aAn}p(S_i) = p(J)\le n$.
Recall that local optimality corresponds to $\bar\eps=\hat\eps=0$.
By Lemma~\ref{lem:sym-local-opt} and the AM-GM inequality,
\begin{align*}
    \prod_{i\in\aAn} \bar v_i(S_i)
& \le \prod_{i\in\aAn} \bar v_i(R_i\cup S_i)
\le \prod_{i\in\aAn} (1 + p(S_i)) \bar v_i(R_i) \\
& \leq \left(\frac1n\sum_{i\in\aAn}(1 + p(S_i))\right)^n \prod_{i\in\aAn} \bar  v_i(R_i) \\
& \le 2^n \cdot  \prod_{i\in\aAn} \bar  v_i(R_i) \,. \qedhere
\end{align*}

\end{proof}
Proposition~\ref{prop:firstWelfare} is included solely for the intuition and we do not use it directly: we still need to take the items in $H$ into consideration and move from the endowed valuations $\bar v_i$ to the original valuations $v_i$.
In what follows, we relate the locally optimal allocation of $J$, $(R_i)_{i\in {\aA}}$ to the optimal solution $(H_i\cup S_i)_{i\in \aA}$ with $H_i\subseteq H$ and $S_i\subseteq J$ corresponding to the parts of the optimal bundle in $H$ and $J$, respectively.

\begin{lemma}\label{lem:sym-price-bound}
Let $\cR=(R_i)_{i\in \aA}$ be an $\bar\eps$-local optimum with respect to the endowed valuations $\bar v_i$, and let $i\in \aAn$. For any set $H_i\subseteq H$, $S_i\subseteq J$, and $\pi(i)\in\argmax_{j\in H_i} \bar v_i(j)$ if $H_i\neq \emptyset$ and $\pi(i)\in H$ arbitrary otherwise. 
\[
\bar v_i(R_i\cup S_i \cup H_i) \le (1 + p(S_i) + 2\hat\eps|S_i|) \, \bar v_i(R_i) + |H_i| \, v_i(\pi(i))\, . 
\]
\end{lemma}

\begin{proof}
By submodularity and the choice of $\pi(i)$, $v_i(H_i) \leq \sum_{j \in H_i} v_i(j) \leq |H_i| v_i(\pi(i))$.
 By  Lemma~\ref{lem:sym-local-opt} and submodularity again, the lemma follows.
\end{proof}

We are ready to prove the approximation guarantee in the symmetric case.

\begin{lemma}\label{lem:rematching}
Let $\eps\ge0$, $\bar\eps=\eps/(2m)$, and $\cR=(R_i)_{i\in \aA}$ be an $\bar\eps$-local optimum with respect to the endowed valuations $\bar v_i$.
Then, there exists a matching $\rho: \aA \to H$ such that
\[
\NSW(\cR, \rho) \ge \frac{\OPT}{4(1+\eps)}  \,.
\]
\end{lemma}

\begin{proof}
Consider the optimal allocation $(H_i\cup S_i)_{i\in \aA}$ and $\pi(i)$, $i\in\aA$ as above. Denote 
\[
V_i \coloneqq \max \{ v_i(\pi(i)), v_i(\ell(i)), v_i(R_i) \}\, ,\]
 and recall that according to Lemma~\ref{lem:rematching-estimation},  our algorithm achieves value at least $\left( \prod_{i=1}^{n} V_i \right)^{1/n}$.
 First, consider an agent $i\in\aAn$.
By Lemma~\ref{lem:sym-price-bound},
$$ \bar{v}_i(S_i \cup H_i) \leq \bar{v}_i(R_i \cup S_i \cup H_i) \leq (1 + p(S_i) + 2\hat\eps|S_i|) \, \bar v_i(R_i) + |H_i| v_i(\pi(i)).$$
Equivalently, since $\bar{v}_i(S) = v_i(S) + v_i(\ell(i))$,
\begin{equation*}
\begin{aligned}
v_i(S_i \cup H_i) &\leq v_i(R_i) + (p(S_i) + 2\hat\eps|S_i|)(v_i(R_i) + v_i(\ell(i))) + |H_i| v_i(\pi(i))\\
&\leq (1 + 2 p(S_i) + 4\hat\eps|S_i| + |H_i| ) V_i \, .
\end{aligned}
\end{equation*}
The same bound also holds for $i\in \aA\setminus\aAn$, noting that $v_i(S_i\cup H_i)=v_i(H_i)\le |H_i| v_i(\pi(i))$.
Hence, by Lemma~\ref{lem:rematching-estimation}, we get
$$ \OPT = \left( \prod_{i=1}^{n} v_i(S_i \cup H_i) \right)^{1/n} \leq \prod_{i=1}^{n} \left( (1 + 2 p(S_i) + 4\hat\eps|S_i| + |H_i| ) V_i \right)^{1/n}.$$
By the AM-GM inequality, 
\begin{eqnarray*}
\OPT  &  \leq  & \prod_{i=1}^{n} \left( 1 + 2 p(S_i) + 4\hat\eps|S_i| + |H_i| \right)^{1/n} \left(\prod_{i=1}^{n} V_i \right)^{1/n} \\
 & \leq & \left(\frac{1}{n} \sum_{i=1}^{n} \left( 1 + 2 p(S_i) + 4\hat\eps|S_i| + |H_i| \right) \right) \left(\prod_{i=1}^{n} V_i \right)^{1/n}  \\
 & \leq & 4\left(1+\frac{m\hat\eps}{n}\right) \left(\prod_{i=1}^{n} V_i \right)^{1/n}
 \end{eqnarray*}
using the facts that $\sum_{i=1}^{n} p(S_i) \leq n$, $\sum_{i=1}^{n} |H_i| \leq n$ and $|S_i|\le m$. The proof is complete since $\left(1+\frac{m\hat\eps}{n}\right)\le (1+\hat\eps)^{m/n}=(1+\bar \eps)^m<(1+\eps)$ by the definition of $\hat\eps$ and by \eqref{eq:bar-eps-bound}.
\end{proof}

\subsection{Analysis for asymmetric NSW}
\label{sec:AsymmetricAnalysisSimple}
We now modify the above analysis to the case of general 
weights $w_i$. The main difference is in the defintion of the prices $p_j$.
For $\bar\eps=\eps/(2m)$, let 
 $\cR=(R_i)_{i\in \aA}$ be an $\bar\eps$-local optimum with respect to the endowed valuations $\bar v_i$.
Let $j\in \aJ$ and let $i\in \aAn$ be the agent such that $j\in R_i$. 
We define the \emph{price} of $j$ as
\[
p_j \coloneqq w_i \log \frac{\bar v_{i}(R_{i})}{\bar v_{i}(R_{i} - j)}  \,.
\]
Note that in the symmetric case we replaced the expression of the form $w_i \log \alpha$ by $\alpha-1$. This allowed a tighter bound; however, the proof of Lemma~\ref{lem:sym-local-opt} would not extend to the asymmetric case with such a price definition. In Lemma~\ref{lem:localOptima}, the bound includes an exponential expression of the price instead of a linear one.
We first show the analogue of Lemma~\ref{lem:sym-spending-bound}. Note that the total spending is now bounded as $p(J)\le 1$ rather than $p(J)\le n$; this is mainly due to the incorporation of the weights $w_i$ in the definition of the prices. 

\begin{lemma}[Bounded spending]
\label{lem:spendingBound}
For an $\bar\eps$-local optimum $\cR = (R_i)_{i \in \aAn}$ and prices $p_j$ defined as above, 
$p(R_i) \le w_i$ for every agent $i\in \aAn$, and consequently, $p(J)\le 1$.
\end{lemma}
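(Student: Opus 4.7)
The plan is to bound $p(R_i)$ directly from the definition by combining the elementary inequality $\log(1+x)\le x$ with the marginal bound in Proposition~\ref{prop:marginals}. Unrolling the definition of the prices gives
\[
p(R_i) \;=\; w_i \sum_{j\in R_i} \log \frac{\bar v_i(R_i)}{\bar v_i(R_i-j)}
\;=\; w_i \sum_{j\in R_i} \log\!\left(1+\frac{\Delta_j}{\bar v_i(R_i-j)}\right),
\]
where I write $\Delta_j := \bar v_i(R_i)-\bar v_i(R_i-j)$ for the marginal contribution of item $j$. So it suffices to show that $\sum_{j\in R_i} \log\!\left(1+\tfrac{\Delta_j}{\bar v_i(R_i-j)}\right)\le 1$.

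The first step is to apply $\log(1+x)\le x$ termwise, reducing the task to showing $\sum_{j\in R_i}\tfrac{\Delta_j}{\bar v_i(R_i-j)}\le 1$. The key step is then to replace the denominator by a common, larger quantity using Proposition~\ref{prop:marginals}: since $\bar v_i$ is submodular on $J$, that proposition gives $\bar v_i(R_i-j)\ge \sum_{k\in R_i}\Delta_k$ for every $j\in R_i$. Therefore
\[
\sum_{j\in R_i}\frac{\Delta_j}{\bar v_i(R_i-j)} \;\le\; \sum_{j\in R_i}\frac{\Delta_j}{\sum_{k\in R_i}\Delta_k} \;=\; 1,
\]
so $p(R_i)\le w_i$, as claimed. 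Summing over $i\in\aAn$ (and using $p(R_i)=0$ by convention for $i\in\aA\setminus\aAn$, since these agents hold no items from $J$) yields $p(J)=\sum_{i\in\aA}p(R_i)\le \sum_{i\in\aA}w_i=1$.

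I do not expect a real obstacle here: the statement is a clean consequence of Proposition~\ref{prop:marginals}, and the $\bar\eps$-local optimum hypothesis is actually not used in this particular lemma (what matters is only that the prices are well defined and that $\bar v_i$ is submodular, so that the marginal bound applies). The only point requiring a bit of care is making sure Proposition~\ref{prop:marginals} is legitimately applicable: it requires $\bar v_i$ to be a submodular endowed valuation with $\bar v_i(\emptyset)\ge \bar v_i(R_i)-\bar v_i(R_i-j)$ for every $j\in R_i$, which holds because $\bar v_i(\emptyset)=v_i(\ell(i))\ge v_i(j)$ by the definition of $\ell(i)$, and $v_i(j)$ upper-bounds the marginal of $j$ by submodularity.
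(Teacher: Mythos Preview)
Your proof is correct and follows essentially the same approach as the paper: apply $\log x\le x-1$ termwise, then use Proposition~\ref{prop:marginals} to bound the denominators so that the resulting sum telescopes to~$1$. The only cosmetic difference is that the paper pulls out a single minimizing denominator $\bar v_i(R_i-j')$ with $j'\in\argmin_{j}\bar v_i(R_i-j)$ before summing, whereas you lower-bound each denominator $\bar v_i(R_i-j)$ by $\sum_{k\in R_i}\Delta_k$ directly; both routes are valid and yield the same bound.
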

\begin{proof}
From the definition of $p_j$, we have
\[
\begin{aligned}
p(R_i)&=w_i \sum_{j \in R_i} \log \frac{\bar v_i(R_i)}{\bar v_i(R_i-j)} \le 
w_i \sum_{j \in R_i} \frac{\bar v_i(R_i) - \bar{v}_i(R_i-j)}{\bar v_i(R_i-j)} \\
& \le w_i \frac{\sum_{j \in R_i} \bar v_i(R_i) - \bar{v}_i(R_i-j)}{\min_{k\in R_i}\bar v_i(R_i-k)}
\le w_i\, .
\end{aligned}
\]
In the first inequality, we used $\log x \leq x-1$, and in the third inequality we used Proposition~\ref{prop:marginals}.
Adding up the prices over all the sets $R_i$, whose union is $J$, we obtain $p(J) = \sum_{i \in \aAn} p(R_i) \leq \sum_{i \in \aAn} w_i \leq 1$.
\end{proof}

\begin{lemma}
\label{lem:localOptima}
For endowed valuations $\bar v_k$, an $\bar\eps$-local optimum $\cR = (R_i)_{i \in \aAn}$ and prices $p_j$ defined as above, for every item $j \in R_i$ and every agent $k \in \aAn$, we have 
\[
\frac{\bar v_k(R_k + j)}{\bar v_k(R_k)} \le (1+\bar\eps)^{1/w_k} \ee^{p_j / w_k} \, .
\]
Moreover, if the endowed  valuation $\bar v_k$ is submodular, for all $T\subseteq \aJ$, we have
\[
\frac{\bar v_k(R_k \cup T)}{\bar v_k(R_k)} \le (1+\bar\eps)^{|T|/w_k} \cdot \ee^{p(T)/ w_k} \,.\]
\end{lemma}

\begin{proof}
By definition, $\ee^{p_j/w_i} = \frac{\bar v_{i}(R_{i})}{\bar v_{i}(R_{i} - j)}$. 
If $k = i$ the first statement is trivial. 
Otherwise, for $k\neq i$, the first statement follows from  the $\bar\eps$-optimality of $\cR$; if false, we would swap item $j$ to agent $k$.  

For the second statement, let us denote $T = \{t_1,t_2,\ldots,t_{|T|}\} \subseteq \aJ$. Since $\bar v_k$ is submodular, by Proposition~\ref{prop:submodularFraction} we have
\[
\begin{aligned}
\frac{\bar v_k(R_k \cup T)}{\bar v_k(R_k)} = 
\prod_{a=1}^{|T|} \frac{\bar v_k(R_k \cup \{t_1, \dots, t_{a}\})}{\bar v_k(R_k \cup \{t_1, \dots, t_{a-1}\})}
&\le \prod_{a=1}^{|T|} \frac{\bar v_k(R_k + t_{a} )}{\bar v_k(R_k)}\\
&\le \left(1+\bar\eps\right)^{|T|/w_k} \ee^{p(T)/w_k}  \,. 
\end{aligned}
\qedhere\]
\end{proof}

Similarly to Lemma~\ref{lem:sym-price-bound}, we have the following lemma for the bound of optimum solutions using prices and values of singletons in $H$.

\begin{lemma}\label{lem:asym-price-bound}
Let $\cR=(R_i)_{i\in \aA}$ be an $\bar\eps$-local optimum with respect to the endowed valuations $\bar v_i$. For $i\in\aAn$, let  $H_i\subseteq H$, $S_i\subseteq J$,  $\pi(i)\in \arg\max_{j\in H_i} v(j)$ if $H_i\neq\emptyset$ and $\pi(i)\in H$ arbitrary otherwise. Then,
\[
\bar v_i(R_i\cup S_i \cup H_i) \le (1+\bar\eps)^{|S_i|/w_i} \ee^{p(S_i) /w_i}  \bar v_i(R_i) + |H_i|v_i(\pi(i)). 
\]
\end{lemma}

\begin{proof}
 We have
\[
\begin{aligned}
	\bar v_i(R_i\cup S_i \cup H_i) &= v_i(R_i\cup S_i\cup H_i)+v_i(\ell(i))\\ 
	&\le v_i(R_i\cup S_i)+v_i(H_i)+v_i(\ell(i))\\
	&\le \bar v_i(R_i\cup S_i)+  \sum_{j \in H_i} v_i(j)\\
	&\le (1+\bar\eps)^{|S_i|/w_i} \ee^{p(S_i) /w_i} +  |H_i| v_i(\pi(i))\, .
\end{aligned}
\]
Here, the first and the second inequality uses submodularity. The third inequality uses Lemma~\ref{lem:localOptima}
and the choice of $\pi(i)$.
\end{proof}

Our final approximation bound follows by combining this with Lemma~\ref{lem:rematching-estimation}.
\begin{lemma}\label{lem:gen-rematching}
Let $\eps\ge0$, $\bar\eps=\eps/(2m)$, and let $\cR=(R_i)_{i\in \aA}$ be an $\bar\eps$-local optimum with respect to the endowed valuations $\bar v_i$.
Then, there exists a matching $\rho: \aA \to H$ such that
\[
\NSW(\cR, \rho) \ge \frac{\OPT}{(2+nw_{\max})\ee(1+\eps)}  \,.
\]
\end{lemma}

\begin{proof}
Let the optimal allocation to agent $i\in\aA$ be $S_i \cup H_i$, as above, with $\pi(i)\in \argmax_{j\in H_i} v(j)$ and $\pi(i)\in H$ arbitrary otherwise. Let
\[
V_i \coloneqq \max \{ v_i(\pi(i)), v_i(\ell(i)), v_i(R_i) \}\, .
\]
Thus, Lemma~\ref{lem:rematching-estimation} shows that Algorithm~\ref{alg:NSW-template} returns a solution with value at least $\prod_{i\in\aA} V_i^{w_i}$.
 By Lemma~\ref{lem:asym-price-bound}, for $i\in\aAn$,
$$ \bar{v}_i(S_i \cup H_i) \leq \bar{v}_i(R_i \cup S_i \cup H_i) \leq (1+\bar\eps)^{|S_i|/w_i} \ee^{p(S_i) /w_i}  \bar v_i(R_i) + |H_i|v_i(\pi(i))
, .$$
Equivalently, since $\bar{v}_i(S) = v_i(S) + v_i(\ell(i))$,
$$ 
\begin{aligned}v_i(S_i \cup H_i) 
&\leq v_i(R_i) + ((1+\bar\eps)^{|S_i|/w_i} \ee^{p(S_i) /w_i} - 1)(v_i(R_i) + v_i(\ell(i))) + |H_i| v_i(\pi(i))\\
&\leq  \left( 2  (1+\bar\eps)^{|S_i|/w_i} \ee^{p(S_i) /w_i} - 1 + |H_i|\right)V_i\, .
\end{aligned}$$
The same bound trivially holds for agents $i\in \aA\setminus\aAn$.
Hence, 
$$ \OPT =  \prod_{i=1}^{n} \left(v_i(S_i \cup H_i) \right)^{w_i} \leq \prod_{i=1}^{n} \left(  2  (1+\bar\eps)^{|S_i|/w_i} \ee^{p(S_i) /w_i}  + |H_i| ) V_i \right)^{w_i}.$$
We now get
\begin{eqnarray*}
\OPT  
&  \leq  & \prod_{i=1}^{n} \left(2 (1+\bar\eps)^{|S_i|/w_i} \ee^{p(S_i) /w_i} + |H_i| \right)^{w_i} V_i^{w_i} \\
&  \leq  & \prod_{i=1}^{n} \left(2 + |H_i|\right)^{w_i} (1+\bar\eps)^{|S_i|} \ee^{p(S_i)} V_i^{w_i} \\
 & \leq & \left(\sum_{i=1}^n w_i(2 + |H_i|)\right) (1+\bar\eps)^m \ee \,\prod_{i=1}^{n} V_i^{w_i} \label{eq:lousy-estimate} \\ 
 & \leq & (2+nw_{\max})(1+\eps)\ee \,\prod_{i=1}^{n} V_i^{w_i}\, .
 \end{eqnarray*}
 Here, the second inequality follows since $(1+\bar\eps)^{|S_i|/w_i} \ee^{p(S_i) /w_i}\ge 1$, the third inequality follows by the AM-GM inequality 
and by  $\sum_{i=1}^{n} p(S_i) \leq 1$ (Lemma~\ref{lem:spendingBound}); the final inequality uses $\sum_{i=1}^{n} |H_i| \leq n$, and \eqref{eq:bar-eps-bound}.
\end{proof}

\begin{remark}[Natura~\cite{bento}]\label{rem:bento}\em
The bound in Theorem~\ref{thm:main} can be in fact strengthened to $\exp\big(\log(3n)+\sum_{i=1}^n w_i\log w_i\big)\cdot (\ee+\eps)$, a slight improvement over the bound in \cite{brown2024approximation} for the additive case.
This follows by strengthening the bound $(1+\varepsilon)(2+nw_{\max})$ in Lemma~\ref{lem:gen-rematching} to $3n(1+\varepsilon)\prod_{i\in\aA}w_i^{w_i}$. To see this, we only need to change the final estimate on $\prod_{i=1}^{n} \left(2 + |H_i|\right)^{w_i}$ in the proof. Here, we are given $h_i\ge 0$ such that $\sum_{i\in A} h_i\le n$, and need to upper bound $\prod_{i\in A}(2+ h_i)^{w_i}$. Instead of simply using AM-GM, we use AM-GM for a reweighted product:
\[
\begin{aligned}
\prod_{i\in\aA}(2 + h_i)^{w_i}=\prod_{i\in\aA}w_i^{w_i}\prod_{i\in\aA}\left(\frac{2+ h_i}{w_i}\right)^{w_i}
\le \prod_{i\in\aA}w_i^{w_i}\sum_{i\in \aA} (2 + h_i)\le 3n\prod_{i\in\aA}w_i^{w_i}\, . 
\end{aligned}
\]
\end{remark}

\subsection{Our techniques and comparison with previous approaches}\label{sec:overview-GHVLiV}
The ideas in this paper were gradually developed by a series of work by subsets of the authors, namely, the conference papers \cite{GHV20} and in \cite{LiV22}, and the joint paper \cite{GHLVV23}.  The present paper represents further simplifications of \cite{GHLVV23}, and can be thus seen as the joint journal version of these three papers.
At a high level, all three algorithms proceed in three phases, with Phases 1 and 3 (``matching and rematching'') being the same as outlined above. However, they largely differ in how the allocation $\cR$ of $J=\aG\setminus H$ is obtained in Phase 2.

The matching/rematching framework qas first used by
Garg, Husi\'c, and V\'egh \cite{GHV20}. For Phase 2 (items not included in the initial matching), they used a rational convex relaxation, based on the concave extension of Rado valuations.  After solving the relaxation exactly, they used combinatorial arguments to sparsify the support of the solution and construct an integral allocation. This relaxation and rounding exploits the combinatorial structure of Rado valuations, in particular, that the relaxation with the concave extension can be solved exactly. 

Li and Vondr\'ak \cite{LiV22} obtained the first polynomial-time algorithm for arbitrary submodular valuations. A significant challenge is that  for submodular functions,  the concave extension is NP-hard to evaluate. Instead, they worked with the multilinear extension, which can be evaluated with random sampling, but it is not concave. To solve the relaxation (approximately), they used an iterated continuous greedy algorithm. The allocation $\cR$ is obtained by independent randomized rounding of this fractional solution. Whereas the algorithm is simple, the analysis is somewhat involved. The main tool to analyze the rounding is the Efron--Stein concentration inequality; but this only works well if every item in the support of the fractional solution has bounded value. This is not true in general, and the argument instead analyzes a two-stage randomized rounding that gives a lower bound on the performance of the actual algorithm. First, a set of ``large'' fractional items is preserved, and a careful combinatorial argument is needed to complete the allocation.

Our approach for Phase 2 is radically different and much simpler. We do not use any continuous relaxation; instead, $\cR$ is obtained by discrete local search. Our discrete local search works with a certain modification of the valuation functions, where each agent is endowed with their largest singleton $\ell(i)$ ``for free''. 
Using these modified valuations, we can guarantee a high NSW value of an infeasible allocation $(R_i+\ell(i))_{i \in A}$ of $J$ in the analysis. 
Our pricing analysis of the local search is inspired by the \emph{conditional equilibrium} notion introduced by Fu, Kleinberg, and Lavi~\cite{DBLP:conf/sigecom/FuKL12ConditionalEqui}.
Conditional equilibrium is a relaxation of Walrasian equilibrium of indivisible goods:
this is an allocation and pricing where players cannot improve on their utility by adding new goods to their current bundle. They show that a conditional equilibrium  attains at least half of 
the maximal utilitarian welfare; this corresponds to an approximate version of the first welfare theorem for quasilinear utilities.
As shown in Proposition~\ref{prop:firstWelfare}, our local search gives an analogous bound for Nash social welfare.

We note that local search applied directly to the NSW problem cannot yield a constant factor approximation algorithm even if we allow changing an arbitrary fixed number $k$ of items. This can be seen already when $m=n$, in which case every allocation with positive NSW value is a perfect matching. Also, some other natural variants of local search do not work, or the analysis is not clear; for example, our analysis does not seem to extend when local search is applied to the (seemingly more natural) choice of $\bar{v}_i(S) = v_i(S + \tau(i))$. %

The idea of defining $\ell(i)$ and using the modified valuation functions is inspired by rounding of the fractional solution from previous approaches; the role of the $\ell(i)$ items is similar to the large items in \cite{LiV22}, but we obtain much better guarantees using a more direct deterministic approach. We remark that the iterated continuous greedy algorithm in \cite{LiV22} can be analogously replaced by continuous local search, with a slightly improved approximation factor---this variant is unpublished.

A key part of the analysis is the Rematching Lemma (Lemma~\ref{lem:rematching-estimation}) that shows how the value of the final 
solution $(R_i+\sigma(i))_{i \in A}$ can be related to the 
infeasible allocation $(R_i+\ell(i))_{i \in A}$. 
While the rematching phase was already present (and essentially identical) in \cite{GHV20} and \cite{LiV22}, the analysis here is much simpler. At the heart of the argument is a general monotonicity property of maximum-weight bipartite matchings on subsets of nodes; we discuss this in Appendix~\ref{sec:laci-matching}. 

Finally, we note that local search approaches have been used for objectives related to Nash Social Welfare, albeit in a different context. In \cite{barman2022nash}, local search was used for maximization of a function in the form $\prod_{i \in \aA} (1+f_i(S))$ where $S$ is selected jointly for all agents, subject to certain constraints. In \cite{peters2020proportionality,MavrovMS23}, local search was used for variants of {\em multiwinner elections}, where again one aims to select a shared solution $S$ for all agents (or voters). Objectives such as $\sum_{i\in \aA} \log (1+v_i(S))$, and the related Proportional Approval Voting (PAV) rule, have been used here as a proxy for the goal of finding a ``core'' solution. We note that these algorithms and their analysis are not directly related to ours; in particular because of the baseline value of $+1$ for each agent, a simple local search can be analyzed directly in these models.
 
\section{Finding fair and efficient allocations}\label{sec:fairness}
In this section, we restate and prove  Theorem~\ref{thm:efx2}. 
\efxhalf* 
Our first key tool  is a subroutine that finds a partial allocation that is $\nfrac12$-EFX and preserves a large fraction of the NSW value.

\begin{lemma}\label{lemma:fair-or-efficient}
There exists a deterministic strongly polynomial algorithm \pFairOrEfficient{$\cT$}, that, for any partial allocation $\cT$, returns another partial allocation $\cR$ that satisfies one of the following properties
\begin{enumerate}[label=(\roman*)]
\item\label{i:inc-NSW} $\NSW(\cR)\ge \NSW(\cT)$ and $\cup_{i\in\aA} R_i \subsetneq \cup_{i\in\aA} T_i$, or  
\item\label{i:EFX} $\NSW(\cR)\ge \tfrac12 \NSW(\cT)$ and $\cR$ is $\nfrac12$-EFX.
\end{enumerate}
\end{lemma}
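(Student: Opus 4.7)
My plan is to describe an algorithm \pFairOrEfficient{$\cT$} based on the single operation of \emph{swap-and-discard}: given a triple $(i,k,j)$ with $j\in T_k$ and $v_i(T_i)<\tfrac12\,v_i(T_k-j)$ (a strong-envy triple witnessing the failure of $\nfrac12$-EFX), form $\cR$ by setting $R_i:=T_k-j$, $R_k:=T_i$, $R_\ell:=T_\ell$ for $\ell\neq i,k$, and discarding $j$. This move strictly reduces the allocated items (so $\cup_i R_i\subsetneq \cup_i T_i$), and its effect on NSW is measured by
\[
\frac{\NSW(\cR)^n}{\NSW(\cT)^n}=\frac{v_i(T_k-j)}{v_i(T_i)}\cdot\frac{v_k(T_i)}{v_k(T_k)}> 2\cdot \frac{v_k(T_i)}{v_k(T_k)},
\]
where the last inequality uses the strong-envy assumption.

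The first step is to check whether $\cT$ is $\nfrac12$-EFX; this is just enumerating triples and costs polynomially many oracle queries. If yes, return $\cR=\cT$ and property~(ii) is immediate since $\NSW(\cR)=\NSW(\cT)$. Otherwise, a strong-envy triple exists, and my plan is to choose it carefully so that one of (i) or (ii) is guaranteed. The natural rule is to pick, among all strong-envy triples, a triple $(i^*,k^*,j^*)$ that maximizes $v_k(T_i)/v_k(T_k)$ (or equivalently the NSW swap ratio itself). If the maximizer satisfies $v_{k^*}(T_{i^*})\ge \tfrac12\,v_{k^*}(T_{k^*})$, then the NSW ratio above is at least $1$, so $\NSW(\cR)\ge\NSW(\cT)$ and property~(i) holds.

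The hard case---and the main obstacle---is when no strong-envy triple satisfies $v_k(T_i)\ge \tfrac12\,v_k(T_k)$, i.e., every envied agent strictly prefers their own bundle to that of any envier. In that situation, a single swap-and-discard may lose NSW, and I cannot directly get property~(i). My plan for this case is to argue structurally: either the directed ``strong-envy graph'' contains a cycle $i_1\to i_2\to\dots\to i_\ell\to i_1$ in which each $i_t$ strongly envies $i_{t+1}$, and then the reverse rotation along the cycle (giving $R_{i_t}:=T_{i_{t+1}}-j_t$ while discarding one item from each envied bundle) multiplies the corresponding agents' values by more than $2$ each, yielding an allocation with $\NSW\ge \NSW(\cT)$ and strictly fewer items (property~(i)); or the strong-envy graph is a DAG, in which case processing agents in reverse topological order and performing the swap-and-discard only at sinks lets us argue that the resulting $\cR$ is already $\nfrac12$-EFX with $\NSW(\cR)\ge \tfrac12\NSW(\cT)$ (property~(ii)), since the factor-$2$ gain at the envier bounds the loss at the envied side to at most a factor of $2$ overall.

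The algorithm is strongly polynomial because each iteration either drops an item (so at most $m$ iterations) or reaches the $\nfrac12$-EFX test; each iteration performs polynomially many oracle calls to locate a strong-envy triple and to decide the case distinction. The subtlety I expect to work through carefully is the DAG/cycle dichotomy and, in particular, verifying that in the DAG case the output allocation is indeed $\nfrac12$-EFX after the swap-and-discard is applied at a sink---for this one must check that no \emph{new} strong envy is introduced toward the newly assigned bundle $R_i=T_k-j$, which follows from the maximality of $v_i(T_k-j)$ in the strong-envy graph together with the structure imposed by the case assumption $v_k(T_i)<\tfrac12\,v_k(T_k)$.
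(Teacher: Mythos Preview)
Your approach works in two of the three cases: when some strong-envy triple $(i,k,j)$ satisfies $v_k(T_i)\ge\tfrac12 v_k(T_k)$, the swap-and-discard gives property~(i); and when the strong-envy graph has a cycle $i_1\to i_2\to\cdots\to i_\ell\to i_1$, the rotation $R_{i_t}:=T_{i_{t+1}}-j_t$ more than doubles each participant's value and drops $\ell$ items, again giving~(i). These parts are fine.

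The genuine gap is the DAG case. Your claim that ``the factor-$2$ gain at the envier bounds the loss at the envied side to at most a factor of~$2$ overall'' is simply false: the envied agent~$k$'s new value $v_k(T_i)$ can be arbitrarily small (even zero), so $\NSW(\cR)$ can fall far below $\tfrac12\NSW(\cT)$. Concretely, with two agents, $T_1,T_2$ such that $v_1(T_1)=1$, $v_1(T_2-j)=10$, $v_2(T_2)=100$, $v_2(T_1)=10^{-2}$, and $v_2(T_1-j')\le 1$ for all $j'$, the strong-envy graph is the single arc $1\to 2$; after swap-and-discard, $\NSW(\cR)^2=10\cdot 10^{-2}=0.1$ versus $\NSW(\cT)^2=100$, and moreover agent~$2$ now strongly envies agent~$1$, so $\cR$ is not $\tfrac12$-EFX either. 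Neither~(i) nor~(ii) is obtained. The hand-wave about ``processing in reverse topological order'' does not rescue this: every swap at a sink can create new strong envy at the formerly-envied agent, and there is no mechanism in your proposal that controls the NSW loss.

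The paper's proof handles exactly this difficulty with two ideas absent from your plan. First, it maintains \emph{trimmed} bundles $S_i\subseteq T_i$ with the invariant $v_i(S_i)\ge\tfrac12 v_i(T_i)$, trimming only when this is safe; this is what ultimately guarantees the $\tfrac12$-NSW bound in outcome~(ii). Second, in the ``hard'' step it does not do a two-agent swap but shifts bundles along an \emph{alternating path} $i_1,S_{i_1},i_2,\ldots,i_\ell,S_{i_\ell}$ in a bipartite $\tfrac12$-EFX feasibility graph, with each $i_f$ ($f\ge 2$) receiving a bundle worth more than twice $v_{i_f}(S_{i_f})$. The point is that the loss can be localized at a \emph{single} agent~$h$, who is given the complement $T_h\setminus(S_h-g_h)$; subadditivity then yields $v_h(R_h)>\tfrac12 v_h(T_h)$, and this factor-$\tfrac12$ loss is compensated by a factor-$2$ gain at the endpoint $i_\ell$ (which satisfies $S_{i_\ell}=T_{i_\ell}$ because of the matching requirements). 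This chain-plus-complement construction is what makes~(i) go through in the case your proposal cannot handle.
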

This is shown by  modifying the approach of Caragiannis, Gravin, and Huang~\cite{CaragiannisGH19}. %
The key subroutine for them provides a similar alternative as in Lemma~\ref{lemma:fair-or-efficient}. In outcome \ref{i:EFX}, they have the stronger EFX guarantee, while in outcome \ref{i:inc-NSW}, they show that the NSW value increases by a certain factor. In outcome \ref{i:inc-NSW}, it is not clear how an increase in the NSW value could be shown for subadditive valuations. However, arguing about the support decrease leads to a simpler argument.

In~\cite{CaragiannisGH19}, only a partial EFX allocation is found. Theorem~\ref{thm:efx2} shows the existence of a complete allocation, albeit with the weaker $\nfrac12$-EFX property.
To derive Theorem~\ref{thm:efx2}, we start by repeatedly calling \pFairOrEfficient until outcome \ref{i:EFX} is reached. Note that the outcome~\ref{i:inc-NSW} can only happen at most $m$ times because the number of items in $\cR$ reduces by at least one after each call.

The allocation at this point may be partial. We show that the remaining items can be allocated using the classical \emph{envy-free cycle procedure} by Lipton, Markakis, Mossel, and Saberi \cite{LiptonMMS04}. Even though this procedure is known for the weaker EF1 property~\cite{Budish11}, we show that---after a suitable preprocessing step---it can produce an $\nfrac12$-EFX allocation while not decreasing the NSW value of the allocation. 
\medskip

In the remainder of this section, we prove Theorem~\ref{thm:efx2}. In Section~\ref{sec:fair-or-efficient}, we derive Lemma~\ref{lemma:fair-or-efficient}, and in Section~\ref{sec:compl-partial}, we derive Theorem~\ref{thm:efx2} from this lemma.

\subsection{Finding a fair or an efficient allocation}\label{sec:fair-or-efficient}
The subroutine \pFairOrEfficient{$\cT$}, shown in Algorithm~\ref{alg:efx}, generalizes the algorithm by Caragiannis, Gravin, and Huang~\cite{CaragiannisGH19} from additive to subadditive valuations. We begin by defining the notions of \emph{$\nfrac12$-EFX feasible bundles} and the associated \emph{feasibility graph}.

\begin{definition}[$\nfrac12$-EFX feasible bundles and graph]\label{def:hefx}
Given a partial allocation $\cT=(T_i)_{i\in \aA}$, we say that $T_k$ is a \emph{$\nfrac12$-EFX feasible bundle for agent $i$}, if 
\[v_i(T_k) \ge \tfrac12 \max_{ \ell\in \aA, j\in T_\ell}v_i(T_\ell - j) \,.\]
The \emph{$\nfrac12$-EFX feasibility graph} of $\cT$ is a bipartite graph $\cBP=(A \cup \cT, \aE)$ where the edge set $\aE$ is defined as:
\begin{equation}\label{eqn:E}
\begin{aligned}
\aE = & \left\{ (i, T_i)\ |\ T_i \text{ is } \text{$\nfrac12$-EFX feasible for } i\right\} 
 \ \ \cup  \\
& \left\{(i, T_\ell)\ |\ v_i(T_\ell) > 2v_i(T_i) \text{ and } v_i(T_\ell)\ge \max_{ k\in \aA, j\in T_k}v_i(T_k - j)\right\}\, . 
\end{aligned}
\end{equation}
\end{definition}

The following claim follows directly from the definition.
\begin{claim}\label{cl:factor-2}
In the graph $\cBP=(A \cup \cT, \aE)$, every node $i\in \aA$ has degree at least 1.
\end{claim}

\begin{proof}
    If the bundle $T_i$ is $\nfrac12$-EFX feasible for agent $i$, then by definition, the edge $(i, T_i)$ is included in $\aE$, and thus the degree of $i$ is at least 1. Otherwise, there must exist a $k \in \aA$ and $j\in T_k$ such that $v_i(T_k - j) > 2v_i(T_i)$. Let $\ell\in\aA$ be the agent whose bundle maximizes $v_i(T_{\ell})$ among all agents. Then it follows that $v_i(T_{\ell})\ge v_i(T_k - j) > 2v_i(T_i)$, which implies, by the graph construction, that the edge $(i, T_\ell)\in \aE$. Hence, every node $i\in A$ has degree at least 1.  
\end{proof}

In this section, a \emph{matching} will refer to a matching between agents and bundles (and not between agents and items as in previous sections).  Thus, a matching is a mapping $\rho: \aA\to \cT\cup\{\bot\}$ such that $\rho(i)=\rho(k)$ implies $\rho(i)=\rho(k)=\bot$. A \emph{perfect matching} has $\rho(i)\neq\bot$ for every $i\in\aA$.
Matchings may use pairs $(i,T_k)$ that are not in $\aE$; we say that $\rho$ is a matching in the bipartite graph $\cBP=(\aA\cup\cT,\aE)$ if $(i,\rho(i))\in \aE$ whenever $\rho(i)\neq \{\bot\}$.
 For two matchings $\rho$ and $\tau$, an \emph{alternating path between $\rho$ and $\tau$} is a path
$P = (i_1, S_{i_1}, i_2, \ldots, S_{i_{k-1}}, i_\ell, S_{i_\ell})$ such that $\rho(i_t)=S_{i_t}$, $t=1,\ldots,\ell$, $\tau(i_{t+1})=S_{i_t}$, $t=1,\ldots,\ell-1$. The following lemma is immediate from the definition of the  $\nfrac12$-EFX feasibility graph.
\begin{lemma}\label{lem:pm}
If the $\nfrac12$-EFX feasibility graph $\cBP=(\aA\cup\cT,\aE)$  of an allocation $\cT$ contains a perfect matching $\rho$, then $(i,\rho(i))_{i\in A}$ is a $\nfrac12$-EFX allocation. 
\end{lemma}

\begin{algorithm}[!t]
\DontPrintSemicolon
\KwIn{Partial allocation $\cT$.} 
\KwOut{Partial allocation $\cR$ such that either $\NSW(\cR)\ge \NSW(\cT)$ and $\cup_i R_i \subsetneq \cup_i T_i$, or $\NSW(\cR)\ge \tfrac{1}{2}\NSW(\cT)$ and $\cR$ is $\nfrac12$-EFX.} 
$\cS  \gets \cT$ \;
$\rho(i)\gets \bot$ for all $i\in \aA$\;
\While{$\rho$ is not a perfect matching in $\cBP$}{
$\cBP=(\aA\cup \cS,\aE) \gets \nfrac12$-EFX feasibility graph of $\cS$\tcp*{Definition~\ref{def:hefx}}
$\cL \gets \{S_i, i\in A\ |\ S_i \subsetneq T_i\}$ \tcp*{set of trimmed down bundles}
Define matching $\tau$ with $\tau(i)=S_i$ for all $i\in \aA$\tcp*{candidate matching}
$\rho \gets$ matching in $\cBP$ where\tcp*{Lemma~\ref{lem:T}}\label{line:rho}
\begin{itemize}[topsep=0.3em]
\setlength\itemsep{-0.3em}
\item[$(a)$] all bundles in $\cL$ are matched, 
\item[$(b)$] $|\{i : \rho(i)=S_i\}|$ is maximized subject to $(a)$, and 
\item[$(c)$] $\rho$ is maximum subject to $(a)$ and $(b)$
\end{itemize}

\If{$\exists i_1 \in A$ not matched in $\rho$}{
$(h, g_h) \gets \arg\max_{k\in\aA, g_k\in S_k} v_{i_1}(S_k - g_k)$ \;\label{line:maxb}
\If{$v_h(S_h - g_h) \ge \frac{1}{2} \cdot v_h(T_h)$}
{$S_h\gets S_h-g_h$}
\Else{
$P = (i_1, S_{i_1}, i_2, \dots, S_{i_{\ell-1}}, i_\ell, S_{i_\ell})$ $\gets$ alternating path between $\tau$ and $\rho$ starting at $i_1$ and ending at either $S_{i_\ell} = S_h$ or an unmatched $S_{i_\ell}\neq S_h$\tcp*{Lemma~\ref{lem:P}}\label{line:P}
Construct $\cR$: \; \label{line:Y}
\ \ \ $R_{i_1} \gets S_h - g_h$\; 
\ \ \ \lFor{$f\gets 2$ \KwTo $\ell$}{$R_{i_f} \gets S_{i_{f-1}}$}
\ \ \ \lFor{$i\in \aA\setminus (\{i_1, \dots, i_\ell\} \cup \{h\})$}{$R_i \gets T_i$}
\If{$P$ ends at an unmatched bundle $S_{i_\ell}\neq S_h$}{
\ \ \ $R_h \gets T_h  \setminus (S_h - g_h)$\; 
}
\Return{$\cR$}\label{line:if}\;
}
}
}
\Return{$\cR=(\rho(i))_{i\in A}$}\label{line:last}\;
\caption{{\texttt{MakeFairOrEfficient($\cT$)}}}\label{alg:efx}
\end{algorithm}

We now give an overview of Algorithm~\ref{alg:efx}. For an input partial allocation $\cT=(T_i)_{i\in\aA}$, it returns a partial allocation $\cR$ that satisfies one of the alternatives in Lemma~\ref{lemma:fair-or-efficient}: either \ref{i:inc-NSW} $\NSW(\cR)\ge \NSW(\cT)$ and $\cup_i R_i \subsetneq \cup_i T_i$, or \ref{i:EFX} $\NSW(\cR)\ge \tfrac{1}{2}\NSW(\cT)$ and $\cR$ is  $\nfrac12$-EFX. %

The algorithm gradually \emph{`trims down'} the bundles $\cT$. That is, we maintain a partial allocation $\cS=(S_i)_{i\in \aA}$ with $S_i\subseteq T_i$ throughout; these are initialized as
$S_i=T_i$. %
Every main loop of the algorithm either terminates by constructing an allocation $\cR$ satisfying  \ref{i:EFX}, or removes an item from one of the $S_h$ sets. The other possible termination option is when the $\nfrac12$-EFX feasibility graph of $\cS$ contains a perfect matching $\rho$. In this case, we return $\cR=(\rho(i))_{i\in \aA}$. This is a $\nfrac12$-EFX allocation by  Lemma~\ref{lem:pm}; Lemma~\ref{thm:alg3} shows it also satisfies $\NSW(\cR)\ge \tfrac{1}{2}\NSW(\cT)$ and is thus a suitable output of type \ref{i:EFX}.

At the beginning of each main loop, we define two matchings. The first is the perfect matching $\tau$ that simply defines $\tau(i)=S_i$ for all $i\in\aA$. The second is a matching $\rho$ in $\cBP$. This is required to satisfy three properties: First, it matches all trimmed down bundles, i.e., all bundles $S_i$ with $S_i\subsetneq T_i$. Second, $|\{i: \rho(i) = S_i\}|$ is maximized subject to the first requirement. Third, subject to these requirements, $\rho$ is chosen as a maximal matching. (The existence of such a matching is guaranteed by Lemma~\ref{lem:T} below).

If $\rho$ is not perfect, then we consider an unmatched agent $i_1$, and find the bundle that maximizes $i_1$'s utility after removal of one item. Let $(S_h,g_h)\in \arg\max_{k\in\aA,g\in S_k} v_{i_1}(S_k-g)$.
If agent $h$'s value of $S_h- g_h$ is at least $\frac{1}{2}$ times their value for the original bundle $T_h$, then we remove $g_h$ from $S_h$ and the main loop finishes. Otherwise, we construct an alternating path between $\rho$ and $\tau$, denoted as $P = (i_1, S_{i_1}, i_2, S_{i_2}, \dots, S_{i_{\ell-1}}, i_\ell, S_{i_\ell})$,  starting with $i_1$ and ending with either $S_{i_\ell} = S_h$ or an unmatched bundle $S_{i_\ell} \neq S_h$. Lemma~\ref{lem:P} shows that such a $P$ exists. Using $P$, we construct an allocation $\cR$ in line~\ref{line:Y}. Lemma~\ref{thm:alg3} shows that this is a suitable output of type \ref{i:inc-NSW}.

\subsubsection{Analysis}
The number of iterations of the repeat loop is at most $m$, because in each iteration except the final one, an item is removed from one of the bundles. Noting that a maximum matching in line~\ref{line:rho} and alternating path in line~\ref{line:P} can be found in strongly polynomial-time, Algorithm~\ref{alg:efx} runs in strongly polynomial-time. 

The next lemma guarantees that the matching $\rho$ is well-defined. 
The proof follows similarly as in~\cite{CaragiannisGH19}.

\begin{lemma}\label{lem:T}
In each iteration of the repeat loop in Algorithm~\ref{alg:efx}, a matching exists in $\cBP$ where all bundles of $\cL$ are matched. 
\end{lemma}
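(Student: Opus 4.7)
I would prove the lemma by induction on the number of times the repeat loop has executed. The base case is immediate: in the first iteration, $\cS=\cT$ and no bundle has been trimmed, so $\cL=\emptyset$ and the empty matching vacuously covers $\cL$.

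For the inductive step, suppose a matching $\rho^{\mathrm{old}}$ satisfying clauses $(a)$--$(c)$ was successfully produced in the previous iteration. If the algorithm reached the current iteration at all, that previous $\rho^{\mathrm{old}}$ was not perfect and the trimming branch was taken: some unmatched agent $i_1$ chose $(S_h,g_h)\in\argmax_{k,g} v_{i_1}(S_k-g)$, we had $v_h(S_h-g_h)\ge\tfrac12 v_h(T_h)$, and the only change to $\cS$ was $S_h^{\mathrm{new}}=S_h^{\mathrm{old}}-g_h$. I need to produce a matching in $\cBP^{\mathrm{new}}$ covering $\cL^{\mathrm{new}}$.

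Two structural observations drive the argument. First, trimming $S_h$ can only decrease $v_i(S_h-j)$ for every $i$, so the right-hand side $\max_{\ell,j}v_i(S_\ell-j)$ appearing in both clauses of \eqref{eqn:E} is monotonically nonincreasing. Hence every edge of $\cBP^{\mathrm{old}}$ whose bundle-endpoint is different from $S_h$ survives in $\cBP^{\mathrm{new}}$. Second, the argmax property gives $v_{i_1}(S_h^{\mathrm{old}})\ge v_{i_1}(S_h^{\mathrm{old}}-g_h)=\max_{\ell,j}v_{i_1}(S_\ell^{\mathrm{old}}-j)$, so $(i_1,S_h^{\mathrm{old}})\in E^{\mathrm{old}}$: the first-type condition is met when $i_1=h$; when $i_1\neq h$ one uses that $(i_1,S_{i_1}^{\mathrm{old}})\notin E^{\mathrm{old}}$ (otherwise swapping $i_1$ onto $S_{i_1}^{\mathrm{old}}$ would strictly increase the count in clause $(b)$ without damaging clause $(a)$, contradicting the maximality of $\rho^{\mathrm{old}}$), which yields $v_{i_1}(S_{i_1}^{\mathrm{old}})<\tfrac12 v_{i_1}(S_h^{\mathrm{old}})$ and hence the second-type condition. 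Because $(i_1,S_h^{\mathrm{old}})$ is an edge and $i_1$ is unmatched, clause $(c)$ forces $S_h^{\mathrm{old}}$ to be matched in $\rho^{\mathrm{old}}$; let $k\ne i_1$ be the agent with $\rho^{\mathrm{old}}(k)=S_h^{\mathrm{old}}$.

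Now I define $\rho^{\mathrm{new}}$ by replacing the single pair $(k,S_h^{\mathrm{old}})$ by $(i_1,S_h^{\mathrm{new}})$ and keeping every other assignment of $\rho^{\mathrm{old}}$; by the first observation those kept edges remain in $\cBP^{\mathrm{new}}$. It remains to verify $(i_1,S_h^{\mathrm{new}})\in E^{\mathrm{new}}$. The same argmax inequality applied to $\cS^{\mathrm{new}}$ (using $v_{i_1}(S_\ell^{\mathrm{new}}-j)\le v_{i_1}(S_\ell^{\mathrm{old}}-j)$) gives $v_{i_1}(S_h^{\mathrm{new}})\ge\max_{\ell,j}v_{i_1}(S_\ell^{\mathrm{new}}-j)$; when $i_1=h$ this is already the first-type condition, and when $i_1\neq h$ the second-type strict inequality $v_{i_1}(S_h^{\mathrm{new}})>2v_{i_1}(S_{i_1}^{\mathrm{new}})$ follows because $S_{i_1}^{\mathrm{new}}=S_{i_1}^{\mathrm{old}}$ and $v_{i_1}(S_{i_1}^{\mathrm{old}})<\tfrac12\max v_{i_1}(S_\ell^{\mathrm{old}}-j)=\tfrac12 v_{i_1}(S_h^{\mathrm{new}})$. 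Since $\cL^{\mathrm{new}}$ differs from $\cL^{\mathrm{old}}$ only by possibly replacing $S_h^{\mathrm{old}}$ with $S_h^{\mathrm{new}}$ (or adding $S_h^{\mathrm{new}}$ if $S_h^{\mathrm{old}}=T_h$), the constructed $\rho^{\mathrm{new}}$ covers $\cL^{\mathrm{new}}$.

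The main obstacle I anticipate is bookkeeping the edge set of $\cBP^{\mathrm{new}}$, since both the left and the right sides of each inequality in \eqref{eqn:E} depend on the evolving $\cS$. The critical insight that cuts through this is the monotonicity of the common max-term, which both preserves non-$S_h$ edges and supplies the replacement edge at $i_1$; combined with the maximality properties $(a)$--$(c)$ of $\rho^{\mathrm{old}}$, a single local swap suffices and no global augmenting-path argument is needed.
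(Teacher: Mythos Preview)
Your proof is correct and follows essentially the same inductive scheme as the paper: both arguments observe that the global quantity $\max_{\ell,j}v_i(S_\ell-j)$ is nonincreasing under trimming (so edges to untrimmed bundles survive), verify that $(i_1,S_h^{\mathrm{new}})\in E^{\mathrm{new}}$ using the argmax choice together with $(i_1,S_{i_1})\notin E^{\mathrm{old}}$, and construct the new matching by rerouting $S_h$ to $i_1$. The only differences are cosmetic: the paper does not bother to invoke clause~$(c)$ to show $S_h^{\mathrm{old}}$ is matched---it simply unmatches whoever (if anyone) held $S_h$---and it does not separately treat the case $i_1=h$, which is in fact vacuous (your own argument that $(i_1,S_{i_1})\notin E^{\mathrm{old}}$ combined with the argmax inequality rules it out).
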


\begin{proof}
Let $\aE^{(t)}$ denote the edge set of the $\nfrac12$-EFX feasibility graph  and $\cL^{(t)}$ the set of trimmed down bundles, and $\rho^{(t)}$ the maximum matching in the  $t$-th iteration. 

We show by induction that there exists a matching $\rho^{(t)}$ such that all bundles in $\cL^{(t)}$ are matched. At the beginning of the first iteration, $\cL^{(1)}$ is empty, so the claim is clearly true. Suppose the claim is true until the beginning of $(t+1)$-st iteration.
Let $\cS$ denote the trimmed down bundles in the $t$-th iteration, and let $i_1$ be the unmatched agent, and $(S_h,g_h)$ the bundle and item selected in line~\ref{line:maxb}. 

By the requirement that $|\{i: \rho(i) = S_i\}|$ is maximized subject to all trimmed down bundles being matched,
we have  $(i_1, S_{i_1})\not\in \aE^{(t)}$. 
This means that $2v_{i_1}(S_{i_1})<\max_{\ell\in A, j\in S_{\ell}}v_{i_1}(S_\ell-j)=v_{i_1}(S_h-g_h)$
by the choice of $h$. Thus, $(i_1, S_h) \in \aE^{(t)}$ follows.

Note that $\cL^{(t+1)}= \cL^{(t)}\cup\{h\}$. Consider the $\nfrac12$-EFX feasibility graph in the $(t+1)$-st iteration. Since all bundles different from $S'_h:=S_h-g_h$  remained unchanged, for every edge $(i,S_k)\in \aE^{(t)}$ with $k\neq h$ it follows that $(i,S_k)\in \aE^{(t+1)}$. According to Definition~\ref{def:hefx}, $(i_1, S_h')\in \aE^{(t+1)}$. Let us define $\rho'$ as
\[
\rho'(i):=\begin{cases} S'_h&\mbox{if }i=i_1\, ,\\
\rho^{(t)}(i)&\mbox{if }i\neq i_1\mbox{ and }\rho^{(t)}(i)\neq S_h\, ,\\
\bot&\mbox{otherwise}\, .
\end{cases}
\]
By the above, this gives a matching in $\aE^{(t+1)}$, and it matches all bundles in $\cL^{(t+1)}=\cL^{(t)}\cup\{h\}$.
\end{proof}

\begin{lemma}\label{lem:P}
The alternating path $P$, as described in line~\ref{line:P} of Algorithm~\ref{alg:efx}, exists. 
\end{lemma}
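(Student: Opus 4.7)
The plan is to construct the alternating path by a greedy walk starting from the unmatched agent $i_1$, alternately following the perfect matching $\tau$ (to move from an agent to its bundle) and the inverse of $\rho$ (to move from a matched bundle to the agent assigned to it in $\rho$). Termination happens at any point when the current bundle is either $S_h$ or is unmatched by $\rho$, which matches exactly one of the two cases in the statement.

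More precisely, I would initialize with $(i_1, S_{i_1})$ where $S_{i_1} = \tau(i_1)$; then iteratively, given a walk $(i_1, S_{i_1}, \dots, i_t, S_{i_t})$, halt if $S_{i_t} = S_h$ or $S_{i_t}$ is unmatched in $\rho$ (setting $\ell = t$), and otherwise let $i_{t+1}$ be the unique agent with $\rho(i_{t+1}) = S_{i_t}$ and extend by $(i_{t+1}, S_{i_{t+1}})$ with $S_{i_{t+1}} = \tau(i_{t+1})$. By construction, any resulting sequence is an alternating path between $\tau$ and $\rho$ starting at $i_1$ and satisfying the required termination condition.

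The main obstacle is showing that the walk terminates; I will prove that the agents $i_1, i_2, \ldots$ visited are pairwise distinct, so that the walk must halt after at most $n$ steps. Suppose for contradiction that $i_k = i_j$ for some $k > j \ge 1$, with $k$ chosen minimal. The case $j = 1$ is immediate: $i_1$ is unmatched in $\rho$, whereas every $i_k$ with $k > 1$ is defined so that $\rho(i_k) = S_{i_{k-1}}$, a contradiction. For $j > 1$, the equality $i_k = i_j$ together with $\rho$ being a matching forces $S_{i_{k-1}} = \rho(i_k) = \rho(i_j) = S_{i_{j-1}}$, and since $\tau$ assigns distinct bundles to distinct agents, this forces $i_{k-1} = i_{j-1}$, contradicting the minimality of $k$. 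Hence, the walk halts in at most $n$ steps at a bundle equal to $S_h$ or unmatched in $\rho$, yielding the desired alternating path.
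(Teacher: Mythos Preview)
Your proof is correct and follows essentially the same approach as the paper: build the alternating path by walking from $i_1$, alternately using $\tau$ and $\rho^{-1}$, stopping when you hit $S_h$ or a bundle unmatched in $\rho$. The paper's proof is very terse on termination (``Continuing this way, we eventually reach\ldots''), whereas you supply the standard no-repeated-vertex argument explicitly, which is a welcome clarification. One minor remark: when you invoke ``$\tau$ assigns distinct bundles to distinct agents,'' note that the $S_i$ are vertices of the bipartite graph $\cBP$ indexed by $i$, so they are distinct as vertices even if some underlying sets happen to coincide; this is what makes the injectivity step go through. The paper's proof also records that $(i_1,S_{i_1})\notin E$, but that fact is not needed for the bare existence of $P$ and is used only in the subsequent analysis.
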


\begin{proof}
Since $i_1$ is an unmatched agent and the requirement that $|\{i: \rho(i) = S_i\}|$ is maximized subject to all trimmed down bundles being matched
in the maximum matching $\rho$ in line~\ref{line:rho}, 
we must have $(i_1, S_{i_1})\not\in \aE$. If $\rho(i_2)= S_{i_1}$ for an agent $i_2\in \aA$,  then we continue with $S_{i_2}$, otherwise we stop. Continuing this way, we eventually reach either $S_{i_\ell} = S_h$ or an unmatched bundle $S_{i_\ell} \neq S_h$.
\end{proof}

\begin{lemma}\label{thm:alg3}
If 
Algorithm~\ref{alg:efx} returns an allocation $\cR$ 
in  line~\ref{line:last}, then $\NSW(\cR)\ge \tfrac{1}{2}\NSW(\cT)$ and $\cR$ is  $\nfrac12$-EFX. If it returns $\cR$ in line~\ref{line:if}, then $\NSW(\cR)\ge \NSW(\cT)$ and $\cup_i R_i \subsetneq \cup_i T_i$.
\end{lemma}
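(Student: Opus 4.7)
The backbone of the proof is the invariant $v_i(S_i) \ge \tfrac{1}{2} v_i(T_i)$, maintained for every agent $i$ throughout the outer repeat loop. It holds initially since $\cS = \cT$, and is preserved under a trim step because $S_h$ is shrunk to $S_h - g_h$ only when $v_h(S_h - g_h) \ge \tfrac{1}{2} v_h(T_h)$. For the line~\ref{line:last} case, $\rho$ is a perfect matching in the feasibility graph $\cBP$, so $\cR = (\rho(i))_{i \in \aA}$ is $\nfrac12$-EFX by Lemma~\ref{lem:pm}. For the NSW bound, I would verify $v_i(R_i) \ge \tfrac{1}{2} v_i(T_i)$ per agent: if $\rho(i) = S_i$, the invariant gives it directly; if $\rho(i) = S_k$ with $k \neq i$, the cross-edge condition $v_i(S_k) > 2 v_i(S_i)$ combined with the invariant yields $v_i(R_i) > v_i(T_i)$. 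Multiplying over all agents gives $\NSW(\cR) \ge \tfrac{1}{2} \NSW(\cT)$.

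For the line~\ref{line:if} case, $\cR$ is constructed from an alternating path $P = (i_1, S_{i_1}, i_2, \ldots, i_\ell, S_{i_\ell})$. Two factor-$2$ inequalities hold uniformly along the path: $v_{i_1}(R_{i_1}) = v_{i_1}(S_h - g_h) > 2 v_{i_1}(S_{i_1})$, because $(i_1, S_{i_1}) \notin E$ means $S_{i_1}$ fails the $\nfrac12$-EFX feasibility criterion while $(h, g_h)$ is a global argmax; and for $2 \le t \le \ell$, $v_{i_t}(R_{i_t}) = v_{i_t}(S_{i_{t-1}}) > 2 v_{i_t}(S_{i_t})$, from the cross-edges $\rho(i_t) = S_{i_{t-1}}$ in $E$.

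I would then split on the endpoint of $P$. In the subcase $i_\ell = h$ (so $S_{i_\ell} = S_h$), the assignment $R_h = S_{i_{\ell-1}}$ is itself a cross-edge gain, giving $v_h(R_h) > 2 v_h(S_h) \ge v_h(T_h)$. Combined with the two inequalities above and the invariant, every affected agent strictly improves, so $\NSW(\cR) > \NSW(\cT)$; the item $g_h$ lies in $T_h$ but in no $R_i$, yielding strict containment. In the subcase $S_{i_\ell} \ne S_h$ with $S_{i_\ell}$ unmatched in $\rho$, subadditivity gives $v_h(R_h) = v_h(T_h \setminus (S_h - g_h)) \ge v_h(T_h) - v_h(S_h - g_h) > \tfrac{1}{2} v_h(T_h)$, using the else-branch condition $v_h(S_h - g_h) < \tfrac{1}{2} v_h(T_h)$. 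The key additional observation is that since $\rho$ must match every trimmed bundle by requirement~$(a)$, an unmatched $S_{i_\ell}$ is necessarily untrimmed, whence $S_{i_\ell} = T_{i_\ell}$. This saves a full factor of $2$ relative to the invariant at index $t = \ell$, exactly cancelling the $\tfrac{1}{2}$ loss on $h$ in the telescoping bound $v_h(R_h)\prod_{t=1}^{\ell} v_{i_t}(R_{i_t}) > \tfrac{1}{2} v_h(T_h)\cdot 2^{\ell} \prod_{t=1}^{\ell} v_{i_t}(S_{i_t}) \ge v_h(T_h) \prod_{t=1}^{\ell} v_{i_t}(T_{i_t})$. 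Moreover $T_{i_\ell} \ne \emptyset$ lies entirely outside $\cup_i R_i$, giving strict containment.

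The main obstacle is the second subcase of line~\ref{line:if}: the straightforward product bound only yields $\NSW(\cR) \ge \tfrac{1}{2} \NSW(\cT)$, and closing the factor-$2$ gap hinges essentially on the maximality condition $(a)$ of $\rho$, which forces $S_{i_\ell}$ to be untrimmed. The remaining pieces are bookkeeping together with subadditivity.
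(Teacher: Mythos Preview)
Your proof is correct and mirrors the paper's argument in every essential respect: the invariant $v_i(S_i)\ge\tfrac12 v_i(T_i)$, the factor-$2$ cross-edge gains along the alternating path, the use of condition~$(a)$ to force $S_{i_\ell}=T_{i_\ell}$ in the second subcase, and the subadditivity bound on $v_h(R_h)$ all match. The only point you leave implicit is the degenerate possibility $T_{i_\ell}=\emptyset$; the paper handles this by noting that then $\NSW(\cT)=0$, so the empty allocation already satisfies outcome~\ref{i:EFX}.
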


\begin{proof}
Let us start with the case when a perfect matching $\cR=(\rho(i))_{i\in A}$ is returned in  line~\ref{line:last}. The $\nfrac12$-EFX property follows by Lemma~\ref{lem:pm}. Let us show $\NSW(\cR)\ge \tfrac{1}{2}\NSW(\cT)$.

Throughout the algorithm, $v_i(S_i) \ge \frac{1}{2}v_i(T_i)$ is maintained according to the condition on bundle trimming.
By the definition of the $\nfrac12$-EFX feasibility graph,  
either $R_i=S_i$, or $v_i(R_i)>2v_i(S_i)$. Therefore, we have 
\begin{equation}
\begin{aligned}
v_i(R_i) \ge v_i(S_i) \ge \frac{1}{2} v_i(T_i)\, , \quad \forall i\in \aA \, .
\end{aligned}
\end{equation}
Consequently,
\[\NSW(\cR)=\left(\prod_{i\in A} v_i(R_i)\right)^{1/n}  \ge\frac{1}{2}  \left(\prod_{i\in A} v_i(T_i)\right)^{1/n} \ge \tfrac{1}{2}\NSW(\cT)\, .\]

\medskip

Consider now the case when the algorithm terminated with $\cR$ in line~\ref{line:if}. We need to show $\NSW(\cR)\ge \NSW(\cT)$ and $\cup_i R_i \subsetneq \cup_i T_i$. We distinguish two cases based on whether $S_{i_\ell} = S_h$, or whether $S_{i_\ell} (\neq S_h)$ is an unmatched bundle. For the first case, we have 
\begin{equation*}
\begin{aligned}
v_{i_f}(R_{i_f}) & = v_{i_f}(S_{i_{f-1}}) > 2v_{i_f}(S_{i_f}) \ge v_{i_f}(T_{i_f}), \ \forall f\in\{2, \dots, \ell\}\, .
\end{aligned}
\end{equation*}
The first inequality follows from the definition of the $\nfrac12$-EFX feasibility graph, and the second inequality follows from the bundle trimming condition maintained throughout the algorithm. Similarly, we have 
\begin{equation*}
\begin{aligned}
v_{i_1}(R_{i_1}) & = v_{i_1}(S_h - g_h) > 2v_{i_1}(S_{i_1}) \ge v_{i_1}(T_{i_1}) \, .
\end{aligned}
\end{equation*}
This implies 
\[\NSW(\cR)=\left(\prod_{i\in A} v_i(R_i)\right)^{1/n}  >  \left(\prod_{i\in A} v_i(T_i)\right)^{1/n} = \NSW(\cT)\, .\] Since we do not assign $g_h$ to any agent in $\cR$, we must have $\cup_i R_i \subsetneq \cup_i T_i$.

In the second case, since $S_{i_\ell}$ is an unmatched bundle in $\rho$ by the choice of the path $P$, we have $S_{i_\ell}\notin\cL$ by the requirements on $\rho$. That is, $S_{i_\ell}  = T_{i_\ell}$. 
We now make use of subadditivity:
\[ 
v_h(T_h) \le v_h(S_h - g_h) + v_h(T_h \setminus (S_h - g_h))\le \frac{1}{2}v_h(T_h)+v_h(T_h \setminus (S_h - g_h))\, .
\]
Together with the definition of the $\nfrac12$-EFX feasibility graph, this gives
\begin{equation}\label{eqn:Y}
\begin{aligned}
v_{i_\ell}(R_{i_\ell}) & = v_{i_\ell}(S_{i_{\ell-1}}) > 2v_{i_\ell}(S_{i_\ell}) = 2v_{i_\ell}(T_{i_\ell})\, , \\
v_{i_f}(R_{i_f}) & = v_{i_f}(S_{i_{f-1}}) > 2v_{i_f}(S_{i_f}) \ge v_{i_f}(T_{i_f}), \ \forall f\in\{2, \dots, \ell-1\}\, ,\\
v_{i_1}(R_{i_1}) & = v_{i_1}(S_h - g_h) > 2v_{i_1}(S_{i_1}) \ge v_{i_1}(T_{i_1}) \\
v_{h}(R_h) & = v_h(T_h\setminus (S_h - g_h)) > \tfrac{1}{2} v_h(T_h) \, .
\end{aligned}
\end{equation}
By multiplying the above bounds for all agents, we get
\[\NSW(\cR)=\left(\prod_{i\in A} v_i(R_i)\right)^{1/n}  >  \left(\prod_{i\in A} v_i(T_i)\right)^{1/n} = \NSW(\cT)\, .\]
Finally, since we do not assign items in $T_{i_\ell}$ to any agent in $\cR$, we must have $\cup_i R_i \subsetneq \cup_i T_i$. Note that if $T_{i_\ell} = \emptyset$, then $\NSW(\cT) = 0$ and $R_i = \emptyset, \forall i$ is a suitable output of type~\ref{i:EFX}. 
\end{proof}

\subsection{Completing the partial allocation}\label{sec:compl-partial}
In this Section, we derive Theorem~\ref{thm:efx2} from Lemma~\ref{lemma:fair-or-efficient}. The corresponding algorithmic proof is given in Algorithm~\ref{alg:efxm}, which relies on two subroutines \pFairOrEfficient (Algorithm~\ref{alg:efx}) and \pEnvyFreeCycle, %
the latter being an envy cycle procedure adapted from \cite{LiptonMMS04} and described below.

The input of Algorithm~\ref{alg:efxm} is an allocation $\cS$ that is $\alpha$-approximation to the symmetric NSW problem. It then repeatedly calls Algorithm~\ref{alg:efx} until the final allocation is $\nfrac12$-EFX and $2\alpha$-approximation to the symmetric NSW problem. Recall that the output of this subroutine is either a partial allocation $\cT'$ that satisfies either $\NSW(\cT')\ge \NSW(\cT)$ and $\cup_i T_i' \subsetneq \cup_i T_i$, or $\NSW(\cT')\ge \nfrac{1}2\NSW(\cT)$ and $\cT'$ is $\nfrac12$-EFX. Since $\cup_i T_i' \subsetneq \cup_i T_i$ in each call in the first case, the number of calls to Algorithm~\ref{alg:efx} is at most $m$. 

\begin{algorithm}[!t]
\DontPrintSemicolon
\KwIn{Allocation $\cS$ that is $\alpha$-approximation to the NSW problem $(\aA, \aG, (v_i)_{i\in \aA})$.}
\KwOut{Allocation $\cT$ that is $\nfrac12$-EFX and $2\alpha$-approximation to the symmetric NSW problem.}
$\cT\gets \cS$ \;
\While{$\cT$ is not $\nfrac12$-EFX}{
$\cT \gets \pFairOrEfficient(\cT)$\tcp*{Algorithm~\ref{alg:efx}} 
}
$U\gets \aG \setminus \cup_i T_i$\tcp*{set of unallocated items}
\While{$\exists i\in\aA\, , \exists j\in U\, :\, v_i(T_i) < v_i(j)$}{
Let $j\in U$ be such that $v_i(T_i) < v_i(j)$ for some agent $i$\;
$U \gets (U \cup T_i) - j$ \;
$T_i \gets \{j\}$\; 
}
$\cT \gets$ \pEnvyFreeCycle{$\cT,U$} \; 
\Return{$\cT$}\;
\caption{Guaranteeing $\nfrac12$-EFX for the symmetric NSW problem}\label{alg:efxm}
\end{algorithm}

At this point, we have a $\nfrac12$-EFX  partial allocation $\cT$ with  $\NSW(\cT)\ge \tfrac{1}{2}\NSW(\cS)$. The rest of Algorithm~\ref{alg:efxm} allocates the remaining items
$U = \aG\setminus (\cup_{i\in\aA} T_i)$ so that $\NSW(\cT)$ does not decrease, and the $\nfrac12$-EFX property is maintained.

First, we modify the allocation in the second repeat loop to ensure that each agent's value for their bundle is at least their value for each remaining item in $U$. This is done by swapping an agent's bundle $T_i$ with a singleton item $j\in U$ whenever  $i$ values $j$ more than the entire bundle $T_i$.

Finally, we run the envy-cycle procedure \pEnvyFreeCycle{$\cT,U$}, introduced in~\cite{LiptonMMS04}{}, to allocate the remaining unassigned items in $U$, starting from the partial allocation $\cT = (T_i)_{i\in A}$. The procedure maintains and repeatedly updates a directed (envy) graph $D=(A, \aE)$, where each node corresponds to an agent, and there is a directed edge $(i,j)\in \aE$ if agent $i$ envies agent $j$, i.e., $v_i(T_i) < v_i(T_j)$. 

The procedure proceeds as follows. If the graph $D$ contains a directed cycle, then the bundles of the agents along the cycle are \emph{rotated}: each agent receives the bundle of the next agent in the cycle. This rotation strictly improves the utility of each agent involved in the cycle, while leaving the utilities of all other agents unchanged. If the graph contains no cycles, then it must have at least one \emph{source} agent---an agent who is not envied by any other agent. In this case, the procedure selects an arbitrary item from $U$ and assigns it to a source agent. The envy graph is then updated to reflect the new allocation. The procedure continues until all items in $U$ have been assigned. We refer the interested reader to~\cite{LiptonMMS04} for more details. 

\medskip

We now verify the correctness and efficiency of this algorithm. 
\begin{lemma}
The second while loop of Algorithm~\ref{alg:efxm} is repeated at most $nm$ times. It maintains the $\nfrac12$-EFX property and $\NSW(\cT)$ is non-decreasing.
\end{lemma}
\begin{proof}
The bound on the number of swaps follows because every agent $i\in\aA$ may swap their bundle at most $m$ times. After the first swap, they maintain a singleton bundle, and they can swap their bundle for the same item $j$ only once, since their valuation $v_i(T_i)$ strictly increases in each swap.

It is immediate that $\NSW(\cT)$ is non-decreasing. It is left to show that the $\nfrac12$-EFX property is maintained. Let $i\in \aA$ be the agent who swapped their bundle $T_i$ for $T'_i=\{j\}$ in the current iteration. Then, the value of $i$'s own bundle increased while the allocation of everyone else remained the same. Hence, agent $i$ cannot violate the $\nfrac12$-EFX property. For the other agents $k\neq i$, $v_k(T_k)\ge \nfrac12\cdot v_k(T_i'-g)$ for all $g\in T_i'$ trivially holds, since $T_i'$ is a singleton.
\end{proof}
Property \eqref{eq:better-than-singleton} below is satisfied after the second repeat loop. Hence, the next lemma completes the analysis of Algorithm~\ref{alg:efxm}.
\begin{lemma}
The subroutine
\pEnvyFreeCycle{$\cT,U$} terminates in $O(n^3 m)$ time, and $\NSW(\cT)$ is non-decreasing. Assume that $\cT=(T_i)_{i\in\aA}$ is $\nfrac12$-EFX, and 
\begin{equation}\label{eq:better-than-singleton}
v_i(T_i)\ge v_i(j)\quad\forall  i\in \aA, \forall j\in U\, .
\end{equation} 
Then, \pEnvyFreeCycle{$\cT,U$}  also  maintains the $\nfrac12$-EFX property.
\end{lemma}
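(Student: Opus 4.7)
The plan is to analyze separately the two primitive operations of the envy-cycle procedure \pEnvyFreeCycle{$\cT,U$}: \emph{cycle resolution}, where bundles are circulated along a directed cycle of the envy graph $D$, and \emph{item assignment}, where an item $g\in U$ is handed to a source of $D$. All three conclusions---polynomial running time, non-decrease of $\NSW(\cT)$, and preservation of the $\nfrac12$-EFX property---would be proved by checking them against these two operations.

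For the running time, I would track the number of arcs of $D$. Initially $D$ has at most $n(n-1)$ arcs, and each item assignment introduces at most $n-1$ new arcs (the only bundle whose value changed is $T_i+g$, so only arcs pointing at $i$ can become new envy arcs). A cycle resolution strictly decreases the arc count: for every agent $i$ on the cycle, $v_i(T_i') = v_i(T_{i^+}) > v_i(T_i)$ where $i^+$ is the successor of $i$, so the multiset of bundles that $i$ could envy is unchanged while the envy threshold strictly increased, and in particular $i$ no longer envies $i^+$'s former bundle; for agents off the cycle the out-envy count does not change since they see the same multiset of bundles (only permuted). Hence there are $O(nm)$ cycle resolutions overall, each requiring $O(n^2)$ time for cycle detection and arc updates, together with $m$ item assignments of cost $O(n^2)$ each, yielding the claimed $O(n^3 m)$ bound. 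NSW non-decrease is then immediate: in a cycle resolution each on-cycle agent's value strictly increases by the envy definition while all other bundles remain fixed; in an item assignment, monotonicity gives $v_i(T_i+g) \ge v_i(T_i)$ and the rest of the allocation is unchanged.

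The main work is the preservation of $\nfrac12$-EFX. For cycle resolution, the collection of bundles after shifting is a permutation of the bundles before, and every agent's value (weakly) increases, so every previously valid inequality $v_i(T_i) \ge \tfrac12\, v_i(T_k - g)$ transfers to the new allocation via
\[
v_i(T_i') \;\ge\; v_i(T_i) \;\ge\; \tfrac12\, v_i(T_k - g) \;=\; \tfrac12\, v_i(T_{k'}' - g),
\]
where $k'$ is the image of $k$ under the permutation. For an item assignment of $g$ to a source $i$, agent $i$ itself is fine by monotonicity ($v_i(T_i+g) \ge v_i(T_i)$, so $i$'s previous $\nfrac12$-EFX inequalities persist), and the only new constraints come from other agents $k$ evaluating $i$'s enlarged bundle $T_i+g$. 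We must show
\[
v_k(T_k) \;\ge\; \tfrac12\, v_k\bigl((T_i+g) - g'\bigr) \quad \text{for every } g' \in T_i+g.
\]
Because $i$ is a source in $D$, $v_k(T_k) \ge v_k(T_i)$; by the preprocessing inequality~\eqref{eq:better-than-singleton}, $v_k(T_k) \ge v_k(g)$. Subadditivity of $v_k$ now gives
\[
v_k(T_i + g) \;\le\; v_k(T_i) + v_k(g) \;\le\; 2\, v_k(T_k),
\]
and monotonicity yields $v_k\bigl((T_i+g) - g'\bigr) \le v_k(T_i+g) \le 2\,v_k(T_k)$ as required.

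The main obstacle is precisely this last calculation. The classical envy-cycle argument of~\cite{LiptonMMS04} only guarantees EF1; to lift the guarantee to $\nfrac12$-EFX one needs all three ingredients simultaneously---the source property of $D$, the preprocessing inequality~\eqref{eq:better-than-singleton}, and subadditivity of $v_k$---so that the enlarged bundle $T_i+g$ is controlled by exactly $2\,v_k(T_k)$. Dropping any of them would break the factor of two and the analysis would not close.
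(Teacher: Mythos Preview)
Your proposal is correct and follows essentially the same route as the paper: you decompose into the two primitive operations (cycle shifts and item-to-source assignments), use the edge-counting argument from \cite{LiptonMMS04} for the $O(n^3 m)$ bound, and for the $\nfrac12$-EFX invariant you combine the source property $v_k(T_k)\ge v_k(T_i)$ with \eqref{eq:better-than-singleton} and subadditivity to get $v_k(T_i+g)\le 2v_k(T_k)$, then apply monotonicity---exactly as the paper does. One small point worth making explicit (the paper leaves it implicit too): \eqref{eq:better-than-singleton} is stated for the allocation at the \emph{start} of \pEnvyFreeCycle, but you invoke it for the current $T_k$; this is fine because each agent's value $v_k(T_k)$ is non-decreasing across both operations and $U$ only shrinks, so the inequality persists as an invariant.
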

\begin{proof}
The running time analysis is the same as in \cite{LiptonMMS04}; briefly summarized as follows. Finding and removing a cycle in the envy-graph can be done in $O(n^2)$ time. Further, whenever swapping around a cycle, at least one edge is removed from the envy graph. New edges can only be added when we allocate new items from $U$, with at most $n$ edges every time. Since $|U|\le m$, the total number of new edges added throughout is $nm$. This yields the overall $O(n^3 m)$ bound.

Again, it is immediate that $\NSW(\cT)$ is non-decreasing in every step. We need to show that the $\nfrac12$-EFX property is maintained both when swapping around cycles and when adding new items from $U$. When swapping around a cycle, this follows since the set of bundles remains the same, and no  agent's value decreases. 

Consider the case when a source agent say $i$, gets a new item $j$: their new bundle becomes $T'_i=T_i+j$. Note that $i$ is the only agent whose bundle grows; all other bundles remain the same.
For any $k\neq i$ and any $g\in T'_i$, we can bound
\[
v_k(T'_i-g)=v_k(T_i+j-g)\le v_k(T_i)+v_k(j)\le 2v_k(T_k)\, , 
\]
showing that the $\nfrac12$-EFX property is preserved.
Here, the first inequality follows by subadditivity and monotonicity. The second inequality uses  \eqref{eq:better-than-singleton}, and that  $v_k(T_k)\ge v_k(T_i)$, since $i$ was a source node in the envy graph. 
\end{proof}

\section{Conclusion}\label{sec:conclusion} 
We have shown a $(4+\eps)$-approximation algorithm for the symmetric NSW problem with submodular valuations.
Moreover, our algorithm yields an $\exp\big(\log(3n)+\sum_{i=1}^n w_i\log w_i\big)\cdot (\ee+\eps)$-approximation algorithm for the asymmetric NSW problem under submodular valuations (see Remark~\ref{rem:bento}, \cite{bento}). Subsequent work \cite{bei2025nashsocialwelfaresubmodular,feng2024constant,
fengli} has significantly extended these results, leading to a small constant factor approximation independent of the weights.
However, there are still several directions and open problems to explore. 
An obvious one is to improve the approximation ratio for the symmetric case. The current hardness of approximation stands at $\frac{\ee}{\ee-1}\simeq 1.58$ for submodular valuations, which is the same as the optimal factor for maximizing utilitarian social welfare. It would be interesting to prove a separation between the two optimization objectives for submodular valuations.

There are several open questions on the existence of EFX and its relaxations for submodular valuations. A significant one is: Does a (complete) $\alpha$-EFX allocation exist for $\alpha>1/2$? This is open even without imposing any additional efficiency requirements.

\section*{Acknowledgements}
We are grateful to Tomasz Ponitka for providing an improved version of Lemma~\ref{lem:sym-price-bound} (previous version had factor $6$ instead of $4$) and for pointing out a couple of minor issues in Section~\ref{sec:fairness} of the previous version that have been corrected in this paper. We also thank Bento Natura for the improved bound included in Remark~\ref{rem:bento}.

\bibliographystyle{abbrv}
\bibliography{bibfile} 

\appendix
\section{A general rematching lemma}\label{sec:laci-matching}
We now discuss the general statement on matchings underlying our Rematching Lemma (Lemma~\ref{lem:rematching-estimation}). We first show 
a simple and natural statement on bipartite graphs. We were unable to find a previous explicit occurence in the literature. 
For $0$ and $-\infty$ edge weights, this was shown by Brualdi and Scrigmer~\cite{brualdi1968}, and by Brualdi and Mason~\cite{brualdi1972}. This special case implies that the set system representing a transversal matroid of rank $d$ can be reduced to a representation of size $d$. Lemma~\ref{lem:matching-subset} implies the analogous statement for valuated transversal matroids and has been used implicitly in the literature.

\begin{lemma}\label{lem:matching-subset}
Let $\mathcal{G}=(A,B;c)$ be a complete bipartite graph with edge weights $c\in (\RR\cup\{-\infty\})^{A\times B}$. Let $\tau\,:\,A\to B$ be a maximum-weight $A$-perfect matching, and let $H\coloneqq \tau(A)\subseteq B$ denote the set of matched nodes in $B$. Then, for any $X\subseteq A$, there is a maximum-weight $X$-perfect matching $\rho\,:\,X\to B$ such that $\rho(X)\subseteq H$.
\end{lemma}
\begin{proof}
Take any maximum-weight $X$-perfect matching $\rho\,:\,X\to B$ with minimum size of $\rho(X) \setminus H$. 
For the sake of contradiction, assume $\rho(X) \setminus H\neq \emptyset$ and let $j \in \rho(X) \setminus H$.
Let us consider the symmetric difference $\tau\Delta \rho$; this consists of cycles and paths. Since $j \not \in H$, $j$ is an endpoint of an alternating path $P$ between $\rho$ and $\tau$ that starts in $j$ with a $\rho$-edge and ends in $H$ with a $\tau$-edge (since $\tau$ is an $A$-perfect matching).  
Let $Y\subseteq A$ be the set of nodes incident to $P$. Clearly, $Y\subseteq X$. We get another $A$-perfect matching $\tau'\,:\, A\to B$ by flipping $\tau$ along $P$, i.e., setting $\tau'(i)=\rho(i)$ for $i\in Y$ and $\tau'(i)=\tau(i)$ for $i\in A\setminus Y$. Similarly, we obtain another $X$-perfect matching $\rho'\,:\, X\to B$ by flipping $\rho$ along $P$, i.e., setting $\rho'(i)=\tau(i)$ for $i\in Y$ and $\rho'(i)=\rho(i)$ for $i\in X\setminus Y$.

By the optimality of $\tau$, the weight of $\tau$-edges in $P$ is at least the weight of $\rho$-edges in $P$. This implies that the weight of $\rho'$ is at least the weight of $\rho$. This contradicts the choice of $\rho$ since $\rho'(X)\setminus H= (\rho(X)\setminus H)\setminus\{j\}$.
\end{proof}

Lemma~\ref{lem:rematching-estimation} requires a slight extension of the above statement, namely when we may create multiple copies of the nodes in $B\setminus H$. The proof is similar as above.
\begin{lemma}\label{lem:matching-copy}
Let $\mathcal{G}=(A,B;c)$ be a complete bipartite graph with edge weights $c\in (\RR\cup\{-\infty\})^{A\times B}$. Let $\tau\,:\,A\to B$ be a maximum-weight $A$-perfect matching, and let $H\coloneqq \tau(A)\subseteq B$ denote the set of matched nodes in $B$. For some integer $k\ge 1$, let us obtain from $\mathcal{G}$ the bipartite graph $\mathcal{G}'=(A,B';c')$ by creating $k$ copies of every node in $B\setminus H$; thus, $|B'|=k|B|-(k-1)|A|$. Then, for any $X\subseteq A$, there is a maximum-weight $X$-perfect matching $\rho\,:\,X\to B'$ such that $\rho(X)\subseteq H$.
\end{lemma}
\begin{proof}
The case $k=1$ is identical to Lemma~\ref{lem:matching-subset}.
The proof of $k>1$ follows similarly: we consider the symmetric difference of $\tau$ (naturally embedded into the larger graph $\mathcal{G}'$) and $\rho$. We take an alternating path $P$ starting from a node  $j'\in \rho(X)\setminus H$. Assume $j'$ is one of the $k$ copies of a node $j\in B\setminus H$.
 The $A$-perfect matching $\tau'\,:\,A\to B'$ obtained from $\tau$ by flipping along $P$ will assign $\tau'(i')=j'$ for some $i'\in A$. This naturally maps back to an $A$-perfect matching $\bar\tau\,:\,A\to B$ of the same weight in the original graph, by setting $\bar\tau(i')=j$, and $\bar\tau(k)=\tau'(k)$ for $k\neq i'$. This again shows that the weight of $\tau$-edges in $P$ is at least the weight of $\rho$-edges in $P$, and we obtain a contradiction by constructing a better $X$-perfect matching $\rho'\,:\,A\to B'$ as in the $k=1$ case.
\end{proof}

\end{document}